\newtheorem{definition}{Definition}
\newtheorem{proposition}[definition]{Proposition}
\newtheorem{lemma}[definition]{Lemma}
\newtheorem{theorem}[definition]{Theorem}
\newtheorem{corollary}[definition]{Corollary}
\newtheorem{conjecture}[definition]{Conjecture}
\newtheorem{remark}[definition]{Remark}
\newtheorem{example}[definition]{Example}
\newtheorem{question}[definition]{Question}
\newtheorem{memo}[definition]{Memo}
\def\squareforqed{\hbox{\rlap{$\sqcap$}$\sqcup$}}
\def\qed{\ifmmode\squareforqed\else{\unskip\nobreak\hfil
\penalty50\hskip1em\null\nobreak\hfil\squareforqed
\parfillskip=0pt\finalhyphendemerits=0\endgraf}\fi}
\def\endenv{\ifmmode\;\else{\unskip\nobreak\hfil
\penalty50\hskip1em\null\nobreak\hfil\;
\parfillskip=0pt\finalhyphendemerits=0\endgraf}\fi}
\newenvironment{proof}{\noindent \textbf{{Proof.~} }}{\qed}
\def\Dbar{\leavevmode\lower.6ex\hbox to 0pt
{\hskip-.23ex\accent"16\hss}D}
\def\url@leostyle{%
  \@ifundefined{selectfont}{\def\UrlFont{\sf}}{\def\UrlFont{\small\ttfamily}}}
\def\bcj{\begin{conjecture}}
\def\ecj{\end{conjecture}}
\def\bcr{\begin{corollary}}
\def\ecr{\end{corollary}}
\def\bd{\begin{definition}}
\def\ed{\end{definition}}
\def\bea{\begin{eqnarray}}
\def\eea{\end{eqnarray}}
\def\beq{\begin{equation}}
\def\eeq{\end{equation}}
\def\bal{\begin{aligned}}
\def\eal{\end{aligned}}
\def\bem{\begin{enumerate}}
\def\eem{\end{enumerate}}
\def\bex{\begin{example}}
\def\eex{\end{example}}
\def\bim{\begin{itemize}}
\def\eim{\end{itemize}}
\def\bl{\begin{lemma}}
\def\el{\end{lemma}}
\def\bma{\begin{bmatrix}}
\def\ema{\end{bmatrix}}
\def\bpf{\begin{proof}}
\def\epf{\end{proof}}
\def\bpp{\begin{proposition}}
\def\epp{\end{proposition}}
\def\bqu{\begin{question}}
\def\equ{\end{question}}
\def\br{\begin{remark}}
\def\er{\end{remark}}
\def\bt{\begin{theorem}}
\def\et{\end{theorem}}
\def\bmm{\begin{memo}}
\def\emm{\end{memo}}
\def\btb{\begin{tabular}}
\def\etb{\end{tabular}}
\newcommand{\nc}{\newcommand}
\nc{\bbA}{\mathbb{A}} \nc{\bbB}{\mathbb{B}} \nc{\bbC}{\mathbb{C}}
 \nc{\bbD}{\mathbb{D}} \nc{\bbE}{\mathbb{E}} \nc{\bbF}{\mathbb{F}}
 \nc{\bbG}{\mathbb{G}} \nc{\bbH}{\mathbb{H}} \nc{\bbI}{\mathbb{I}}
 \nc{\bbJ}{\mathbb{J}} \nc{\bbK}{\mathbb{K}} \nc{\bbL}{\mathbb{L}}
 \nc{\bbM}{\mathbb{M}} \nc{\bbN}{\mathbb{N}} \nc{\bbO}{\mathbb{O}}
 \nc{\bbP}{\mathbb{P}} \nc{\bbQ}{\mathbb{Q}} \nc{\bbR}{\mathbb{R}}
 \nc{\bbS}{\mathbb{S}} \nc{\bbT}{\mathbb{T}} \nc{\bbU}{\mathbb{U}}
 \nc{\bbV}{\mathbb{V}} \nc{\bbW}{\mathbb{W}} \nc{\bbX}{\mathbb{X}}
 \nc{\bbZ}{\mathbb{Z}}
 \nc{\bA}{{\bf A}} \nc{\bB}{{\bf B}} \nc{\bC}{{\bf C}}
 \nc{\bD}{{\bf D}} \nc{\bE}{{\bf E}} \nc{\bF}{{\bf F}}
 \nc{\bG}{{\bf G}} \nc{\bH}{{\bf H}} \nc{\bI}{{\bf I}}
 \nc{\bJ}{{\bf J}} \nc{\bK}{{\bf K}} \nc{\bL}{{\bf L}}
 \nc{\bM}{{\bf M}} \nc{\bN}{{\bf N}} \nc{\bO}{{\bf O}}
 \nc{\bP}{{\bf P}} \nc{\bQ}{{\bf Q}} \nc{\bR}{{\bf R}}
 \nc{\bS}{{\bf S}} \nc{\bT}{{\bf T}} \nc{\bU}{{\bf U}}
 \nc{\bV}{{\bf V}} \nc{\bW}{{\bf W}} \nc{\bX}{{\bf X}}
 \nc{\bZ}{{\bf Z}}
\nc{\cA}{{\cal A}} \nc{\cB}{{\cal B}} \nc{\cC}{{\cal C}}
\nc{\cD}{{\cal D}} \nc{\cE}{{\cal E}} \nc{\cF}{{\cal F}}
\nc{\cG}{{\cal G}} \nc{\cH}{{\cal H}} \nc{\cI}{{\cal I}}
\nc{\cJ}{{\cal J}} \nc{\cK}{{\cal K}} \nc{\cL}{{\cal L}}
\nc{\cM}{{\cal M}} \nc{\cN}{{\cal N}} \nc{\cO}{{\cal O}}
\nc{\cP}{{\cal P}} \nc{\cQ}{{\cal Q}} \nc{\cR}{{\cal R}}
\nc{\cS}{{\cal S}} \nc{\cT}{{\cal T}} \nc{\cU}{{\cal U}}
\nc{\cV}{{\cal V}} \nc{\cW}{{\cal W}} \nc{\cX}{{\cal X}}
\nc{\cZ}{{\cal Z}}
\nc{\hA}{{\hat{A}}} \nc{\hB}{{\hat{B}}} \nc{\hC}{{\hat{C}}}
\nc{\hD}{{\hat{D}}} \nc{\hE}{{\hat{E}}} \nc{\hF}{{\hat{F}}}
\nc{\hG}{{\hat{G}}} \nc{\hH}{{\hat{H}}} \nc{\hI}{{\hat{I}}}
\nc{\hJ}{{\hat{J}}} \nc{\hK}{{\hat{K}}} \nc{\hL}{{\hat{L}}}
\nc{\hM}{{\hat{M}}} \nc{\hN}{{\hat{N}}} \nc{\hO}{{\hat{O}}}
\nc{\hP}{{\hat{P}}} \nc{\hR}{{\hat{R}}} \nc{\hS}{{\hat{S}}}
\nc{\hT}{{\hat{T}}} \nc{\hU}{{\hat{U}}} \nc{\hV}{{\hat{V}}}
\nc{\hW}{{\hat{W}}} \nc{\hX}{{\hat{X}}} \nc{\hZ}{{\hat{Z}}}
\nc{\hn}{{\hat{n}}}
\def\dim{\mathop{\rm Dim}}
\def\min{\mathop{\rm min}}
\newcommand{\tbc}{\red{TO BE CONTINUED...}}
\newcommand{\red}{\textcolor{red}}
\def\Dbar{\leavevmode\lower.6ex\hbox to 0pt
{\hskip-.23ex\accent"16\hss}D}
\begin{document}

%\Large

\title{Virtual Quantum Markov Chain of four-qubit systems}

\author{Zhixing Chen}\email[]{2409108@buaa.edu.cn}
\affiliation{LMIB(Beihang University), Ministry of education, and School of Mathematical Sciences, Beihang University, Beijing 100191, China} 

\author{Lin Chen}\email[]{linchen@buaa.edu.cn (corresponding author)}
\affiliation{LMIB(Beihang University), Ministry of education, and School of Mathematical Sciences, Beihang University, Beijing 100191, China}

\begin{abstract}
We extend the framework of virtual quantum Markov chains (VQMCs) from tripartite systems to the four-qubit setting. Structural criteria such as the kernel-inclusion condition are analyzed, showing that they are necessary but not sufficient for the existence of a valid recovery map. By explicit examples, we demonstrate that the four-qubit W state admits a recovery channel and thus qualifies as a VQMC, while the GHZ state does not. We further provide semidefinite programming (SDP) formulations to test recoverability and quantify sampling overhead, and establish the non-convexity of the VQMC set through mixed-state counterexamples. These results supply both theoretical insights and computational tools for studying correlations and recoverability in multipartite quantum systems.
\end{abstract}

\maketitle

Keywords: Markov chain, entanglement, measurement

%\tableofcontents

\section{Introduction}

A number of works provide the theoretical and methodological foundations for the study of virtual quantum Markov chains (VQMCs), their recoverability properties, and their applications in quantum information theory. The notion of VQMCs was introduced in~\cite{1}, laying the groundwork for subsequent investigations. Approximate quantum Markov chains were analyzed in~\cite{2}, while the robustness of Markovianity under perturbations was studied in~\cite{3}. The operational meaning of quantum conditional mutual information was further explored in~\cite{4}, and general conditions for recoverability in quantum information theory were developed in~\cite{5}. Necessary criteria for approximate recoverability were identified in~\cite{6}, and a universal recovery map with broad applicability was constructed in~\cite{7}. In parallel, the framework of virtual quantum broadcasting was proposed in~\cite{8}, while a celebrated inequality linking conditional mutual information to approximate Markov chains was provided in~\cite{9}. More recently, optimal tomography protocols for quantum Markov chains based on the continuity of Petz recovery were introduced in~\cite{10}, directly inspiring our SDP-based analysis. The relation between fidelity of recovery and entanglement measures was studied in~\cite{11}, and efficient prediction of quantum properties from few measurements was developed in~\cite{12}. Practical perspectives were further advanced in~\cite{13}, and quasiprobability sampling overhead in simulating non-local channels was investigated in~\cite{14}, which closely relates to our definition of sampling overhead. In the broader context of virtual quantum resources, virtual quantum resource distillation was proposed in~\cite{15}, and optimal unilocal broadcasting schemes were developed in~\cite{16}. Applications to multipartite systems appear in the dynamic quantum secret sharing protocol in~\cite{17}, while experimental demonstrations of three-qubit entangled state control were reported in~\cite{18}. Finally, algorithmic tools for entropy estimation were developed in~\cite{19,20}, offering computational techniques to quantify correlations and recoverability costs. Together, these works establish the theoretical, experimental, and algorithmic foundation upon which our present study of four-qubit VQMCs is built.

Building upon the earlier study on Virtual Quantum Markov Chains (VQMCs) in tripartite systems, the present work extends the conceptual framework and computational methodology to the more intricate four-qubit setting. The first study established the foundational recovery-map formulation, the semidefinite programming (SDP) characterization, and the kernel inclusion criterion as a necessary condition for the VQMC property. These elements provide the essential mathematical and algorithmic tools that the current work generalizes to accommodate the richer entanglement structures and higher-dimensional marginal dependencies inherent in four-body states. In particular, the prior framework guides the adaptation of constructive definitions and SDP tests for verifying VQMC conditions in the four-qubit case, as well as the formulation of virtual complexity measures through quasiprobability sampling overhead. This inheritance of both theoretical principles and computational techniques enables us to rigorously analyze representative examples—including GHZ, W, and mixed states—while addressing new challenges such as the limitations of reconstructing global states from low-order marginals.

%\tbc

\section{Preliminaries}
\label{sec:pre}

In this section we summarize the main tools and known results that will be used throughout the paper. Our exposition follows closely the framework of \cite{1}, where the concept of \emph{virtual quantum Markov chains} (VQMCs) was first introduced and analyzed. For completeness, we recall the relevant definitions, structural characterizations, and computational formulations. These preliminaries will serve as the foundation for our four-qubit extension developed in the subsequent sections.

\subsection{Virtual Quantum Markov Chains}

We begin with the formal definition. 

\begin{definition}[Virtual Quantum Markov Chain in order $B$]\label{def:VQMC}
A tripartite state $\rho_{ABC}$ is called a virtual quantum Markov chain (VQMC) in order $B$ if there exists a Hermitian-preserving, trace-preserving linear map $\mathcal{R}: B \to BC$ such that
\begin{equation}
(\mathrm{id}_A \otimes \mathcal{R})(\rho_{AB}) = \rho_{ABC}.
\end{equation}
Equivalently, the map $\mathcal{R}$ can be expressed as a quasiprobability combination of two completely positive trace-preserving (CPTP) maps:
\begin{equation}
\mathcal{R} = c_1 \mathcal{N}_1 - c_2 \mathcal{N}_2, \qquad c_1,c_2 \ge 0,\quad \mathcal{N}_1,\mathcal{N}_2 \in \mathrm{CPTP}(B,BC).
\end{equation}
\end{definition}

This formulation extends the conventional notion of quantum Markov chains by allowing quasiprobability mixtures of channels, thereby capturing a richer set of correlations beyond exact conditional independence.

\subsection{Algebraic Characterization}

The fundamental structural property of VQMCs can be stated in terms of kernel inclusion. Expanding $\rho_{ABC}$ in a fixed orthonormal basis $\{|j\rangle\}$ of subsystem $C$, we obtain operator blocks
\begin{equation}
\hat{\rho}_{AC|j} := \operatorname{Tr}_B \!\left[ (I_A \otimes \langle j|_C)\, \rho_{ABC}\, (I_A \otimes |j\rangle_C) \right],
\qquad
\hat{\rho}_{BC|j} := \operatorname{Tr}_A \!\left[ (I_B \otimes \langle j|_C)\, \rho_{ABC}\, (I_B \otimes |j\rangle_C) \right].
\end{equation}

\begin{theorem}[Kernel-inclusion criterion \cite{1}]\label{thm:kernel-inclusion}
The state $\rho_{ABC}$ is a VQMC in order $B$ if and only if
\begin{equation}
\operatorname{Ker}(\hat{\rho}_{AC|j}) \subseteq \operatorname{Ker}(\hat{\rho}_{BC|j})
\quad \text{for all } j.
\end{equation}
\end{theorem}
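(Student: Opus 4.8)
The plan is to reduce the existence of a virtual recovery map to a purely linear-algebraic factorization problem and then read off the kernel condition. First, recall that any Hermitian-preserving, trace-preserving (HPTP) map can be written as a quasiprobability combination $c_1\mathcal{N}_1 - c_2\mathcal{N}_2$ of CPTP maps, so the two formulations in Definition \ref{def:VQMC} coincide and it suffices to decide whether there exists an HPTP linear map $\mathcal{R}: B\to BC$ with $(\mathrm{id}_A\otimes\mathcal{R})(\rho_{AB}) = \rho_{ABC}$. Expanding both states in $A$-matrix units, write $M_{aa'} := \langle a|_A\rho_{AB}|a'\rangle_A \in \mathcal{L}(\mathcal{H}_B)$ and $N_{aa'} := \langle a|_A\rho_{ABC}|a'\rangle_A \in \mathcal{L}(\mathcal{H}_B\otimes\mathcal{H}_C)$. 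Since $(\mathrm{id}_A\otimes\mathcal{R})$ leaves the $A$-index untouched, the defining equation is equivalent to the family of constraints $\mathcal{R}(M_{aa'}) = N_{aa'}$ for all $a,a'$. A single linear $\mathcal{R}$ satisfying these exists iff every linear relation among the $\{M_{aa'}\}$ is inherited by the $\{N_{aa'}\}$, i.e. iff the kernel of the map induced by $\rho_{AB}$ on $\mathcal{L}(\mathcal{H}_A)$ is contained in that induced by $\rho_{ABC}$. I would then show this abstract inclusion is equivalent, block by block in the fixed $C$-basis $\{|j\rangle\}$, to the stated condition on $\hat\rho_{AC|j}$ and $\hat\rho_{BC|j}$.

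For the necessity direction, suppose such an $\mathcal{R}$ exists. Writing $\mathcal{R}_j(\cdot) := \langle j|_C\,\mathcal{R}(\cdot)\,|j\rangle_C$ for the diagonal $C$-block of $\mathcal{R}$, the $(j,j)$ component of the defining equation reads $(\mathrm{id}_A\otimes\mathcal{R}_j)(\rho_{AB}) = \sigma^{(j)}_{AB}$, where $\sigma^{(j)}_{AB}$ is the $j$-th diagonal block of $\rho_{ABC}$. Tracing out $A$ gives $\mathcal{R}_j(\rho_B) = \hat\rho_{BC|j}$, while tracing out $B$ identifies $\hat\rho_{AC|j}$ as the corresponding functional applied to $\rho_{AB}$. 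The key positivity step is that if $|v\rangle\in\ker\hat\rho_{AC|j}$ then $\langle v j|\rho_{AC}|v j\rangle=0$, and since $\rho_{AC}=\operatorname{Tr}_B\rho_{ABC}\ge 0$ this forces $\rho_{ABC}$ to annihilate the whole subspace $|v\rangle_A\otimes\mathcal{H}_B\otimes|j\rangle_C$. I would then feed this support constraint back through the recovery identity to obtain the matching kernel statement for $\hat\rho_{BC|j}$, yielding the inclusion.

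For sufficiency, I would construct $\mathcal{R}$ explicitly. Assuming the kernel inclusion, the assignment $M_{aa'}\mapsto N_{aa'}$ is well defined and linear on $\operatorname{span}\{M_{aa'}\}\subseteq\mathcal{L}(\mathcal{H}_B)$; one checks directly that it is Hermitian-preserving (from $M_{aa'}^\dagger = M_{a'a}$ and $N_{aa'}^\dagger = N_{a'a}$) and compatible with trace preservation (since $\operatorname{Tr}_{BC}N_{aa'} = \langle a|\rho_A|a'\rangle = \operatorname{Tr}_B M_{aa'}$). I would then extend this partial map to a globally HPTP map on all of $\mathcal{L}(\mathcal{H}_B)$, for instance by a Petz-type recovery on the support of $\rho_B$ composed with a fixed CPTP completion on the complementary subspace, and verify that the resulting $\mathcal{R}$ reproduces every block $N_{aa'}$, hence $\rho_{ABC}$. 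Writing $\mathcal{R} = c_1\mathcal{N}_1 - c_2\mathcal{N}_2$ then certifies the VQMC property.

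The main obstacle I anticipate is the sufficiency construction rather than necessity: one must package the per-$j$ marginal-kernel data into a single global map and guarantee it is genuinely HPTP (equivalently, a valid quasiprobability channel), while simultaneously matching the off-diagonal $C$-blocks that the marginal operators $\hat\rho_{AC|j}$ and $\hat\rho_{BC|j}$ do not directly see. Controlling this gap, namely showing that the diagonal-block kernel inclusion already pins down the full operator $\rho_{ABC}$ in the tripartite case, is the delicate point; it is precisely where the argument will later break in the four-party setting, explaining why the same condition becomes necessary but not sufficient.
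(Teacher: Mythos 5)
Your opening reduction is the right one, and it is essentially the argument behind the cited result of \cite{1}: writing $M_{aa'}=\langle a|_A\rho_{AB}|a'\rangle_A$ and $N_{aa'}=\langle a|_A\rho_{ABC}|a'\rangle_A$, a linear $\mathcal{R}$ with $(\mathrm{id}_A\otimes\mathcal{R})(\rho_{AB})=\rho_{ABC}$ exists iff every linear relation among the $\{M_{aa'}\}$ is inherited by the $\{N_{aa'}\}$, i.e.\ iff the kernel of the transfer map $X_A\mapsto\operatorname{Tr}_A[(X_A^{T}\otimes I)\rho_{AB}]$ on $\mathcal{L}(\mathcal{H}_A)$ is contained in that of the corresponding map for $\rho_{ABC}$. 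The genuine gap is the step where you claim this is ``equivalent, block by block in the fixed $C$-basis,'' to the displayed condition on $\hat\rho_{AC|j}$ and $\hat\rho_{BC|j}$: that equivalence is false, and no argument can close it. The operators $\hat\rho_{AC|j},\hat\rho_{BC|j}$ retain only the $C$-diagonal, fully traced-out data, while the linear-relation condition is sensitive to the off-diagonal $A$-blocks. Concretely, for the three-qubit GHZ state one finds $\hat\rho_{AC|j}=\tfrac12|j\rangle\langle j|_A\otimes|j\rangle\langle j|_C$ and $\hat\rho_{BC|j}=\tfrac12|j\rangle\langle j|_B\otimes|j\rangle\langle j|_C$, so the stated kernel inclusion holds with equality for $j=0,1$; yet $\langle 0|_A\rho_{AB}|1\rangle_A=0$ while $\langle 0|_A\rho_{ABC}|1\rangle_A=\tfrac12|00\rangle\langle 11|_{BC}\neq 0$, so no linear map whatsoever can exist and GHZ is not a VQMC --- sufficiency of the displayed condition fails. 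The necessity direction also fails under the literal reading: the product state $|+\rangle\langle+|_A\otimes|0\rangle\langle 0|_B\otimes|0\rangle\langle 0|_C$ is trivially recovered by $\mathcal{R}(X)=X\otimes|0\rangle\langle 0|_C$, yet $|-\rangle_A\otimes|0\rangle_C\in\operatorname{Ker}(\hat\rho_{AC|0})$ while it is not in $\operatorname{Ker}(\hat\rho_{BC|0})$. Your positivity step (a kernel vector of $\hat\rho_{AC|j}$ forces $\rho_{ABC}$ to annihilate $|v\rangle_A\otimes\mathcal{H}_B\otimes|j\rangle_C$) constrains supports on $A$, but there is no mechanism transporting this to a kernel vector of an operator on $B\otimes C$; the two kernels live in different spaces and are compared only through an arbitrary identification.

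For context, the paper contains no proof of this theorem --- it is quoted from \cite{1} --- and the version stated here appears to be a mistranscription of the criterion actually proved there, namely $\operatorname{Ker}(\Theta_{B|A})\subseteq\operatorname{Ker}(\Theta_{BC|A})$ for the $A$-block transfer matrices, which are exactly the objects your first paragraph constructs and the ones this paper itself reverts to in its later analysis of the GHZ/W mixture. (The paper's own necessity argument for the four-partite analogue is an informal data-processing sketch, and elsewhere it insists the displayed condition is necessary but not sufficient, contradicting the ``if and only if'' here.) If you want a provable statement, prove the transfer-matrix version: necessity is immediate from linearity of $\mathrm{id}_A\otimes\mathcal{R}$, and the substantive content --- which you correctly flag as the hard part --- is that a well-defined linear map on $\operatorname{span}\{M_{aa'}\}$, which is an operator system containing $\rho_B$, extends to a globally Hermitian-preserving trace-preserving map; that extension, not the per-$j$ marginal data you reduce to, is the actual theorem of \cite{1}.
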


This theorem provides a simple operator-algebraic test for verifying whether a given tripartite state admits a virtual recovery map. It also underpins many of the examples considered in~\cite{1}, such as the fact that the three-qubit $W$ state is a VQMC, while the GHZ state is not.

\subsection{Sampling Overhead via Semidefinite Programming}

A key computational tool introduced in \cite{1} is the quantification of the \emph{sampling overhead}, i.e., the quasiprobability cost required to realize a virtual recovery map. For a four-qubit state $\rho_{ABCD}$ with marginal $\rho_{ABC}$, we define
\begin{equation}\label{eq:nu-4q}
\nu(\rho_{ABCD}) \ :=\ \log \ \min \left\{ c_1 + c_2 \;\middle|\; ( \mathrm{id}_{AB}\otimes \mathcal{R})(\rho_{ABC}) = \rho_{ABCD} \right\},
\end{equation}
where $\mathcal{R} = c_1 \mathcal{N}_1 - c_2 \mathcal{N}_2$ with $\mathcal{N}_1,\mathcal{N}_2 \in \mathrm{CPTP}(C,C\otimes D)$.

This optimization problem can be reformulated as a semidefinite program (SDP). Let $J_1, J_2$ denote the Choi matrices of $\mathcal{N}_1,\mathcal{N}_2$, and define the effective Hermitian-preserving Choi operator $J_{CC'D} := J_1 - J_2$, where $C'$ is an isomorphic copy of $C$. The SDP is then given by
\begin{equation}\label{eq:SDP-4q}
\begin{aligned}
\text{minimize}\quad & c_1 + c_2 \\begin{equation}2pt]
\text{subject to}\quad 
& J_1 \succeq 0,\quad J_2 \succeq 0, \\begin{equation}2pt]
& \operatorname{Tr}_{C'D}(J_1) = c_1\, I_C,\quad \operatorname{Tr}_{C'D}(J_2) = c_2\, I_C, \\begin{equation}2pt]
& \operatorname{Tr}_{C'}\!\left[ \big(\rho_{ABC}^{T_C} \otimes I_{C'D}\big)\, \big(I_{AB} \otimes J_{CC'D}\big) \right] \;=\; \rho_{ABCD}.
\end{aligned}
\end{equation}

Here $T_C$ denotes the partial transpose on $C$ in the computational basis associated with the Choi isomorphism, and $\succeq 0$ indicates positive semidefiniteness. The objective $c_1+c_2$ captures the quasiprobability cost of simulating the virtual recovery, thus providing a quantitative measure of complexity for extending $\rho_{ABC}$ to $\rho_{ABCD}$.

%\subsection{Discussion}

The combination of Definition~\ref{def:VQMC}, Theorem~\ref{thm:kernel-inclusion}, and the SDP formulation \eqref{eq:SDP-4q} constitutes the essential technical foundation for our analysis. These tools allow us to formalize what it means for a state to be a four-body VQMC, to provide an algebraic test for verification, and to quantify the associated sampling cost via convex optimization. In the remainder of the paper, we build upon this framework to explore structural properties, explicit examples, and computational aspects of four-qubit virtual quantum Markov chains.

\section{Analytical Tools for Virtual Quantum Markov Chains}
\label{sec:tool}

In analyzing whether a multipartite quantum state admits a VQMC structure, multiple tools are employed depending on the tractability of the state and the desired level of characterization. We summarize the main structural tools used throughout this work.

\subsection{Recovery-map condition: constructive definition} 
\label{subsec:recovery-condition}

The most fundamental criterion for a four-body state $\rho_{ABCD}$ to be a VQMC is the existence of a CPTP map $\mathcal{R}: C \to CD$ such that
\begin{equation}\label{eq:recovery-map-condition}
\rho_{ABCD} = (I_{AB} \otimes \mathcal{R})(\rho_{ABC}),
\end{equation}
where $\rho_{ABC} = \operatorname{Tr}_D \rho_{ABCD}$. This condition is both necessary and sufficient. In rare cases, such as symmetric or pure states (e.g., the four-qubit W state), it is possible to construct $\mathcal{R}$ explicitly for directly certifying the VQMC property. However, in general, the map may not be accessible in a closed form.

\subsection{SDP formulation: computational test} 
\label{subsec:SDP-test}

The existence of such a CPTP map can be verified by casting the problem into a SDP over the Choi matrix $J_{CC'D}$ of $\mathcal{R}$. Specifically, one solves for $J_{CC'D} \succeq 0$ such that
\begin{equation}\label{eq:SDP-criterion}
\operatorname{Tr}_{C'D}(J_{CC'D}) = I_C \quad \text{and} \quad 
\rho_{ABCD} = \operatorname{Tr}_{C'}\left[ (\rho_{ABC}^T \otimes I_{C'D})(I_{AB} \otimes J_{CC'D}) \right].
\end{equation}
If the SDP is feasible, the state is a VQMC. This approach is fully general but requires numerical implementation and does not always yield analytic insight.

\subsection{Inclusion criterion: necessary but not sufficient} 
\label{subsec:inclusion-criterion}

Recall from Section~\ref{sec:pre} that, for each measurement outcome $j$ on subsystem $C$, 
we defined the conditional states $\hat{\rho}_{AC|j}, \hat{\rho}_{BC|j}$ 
and established the following necessary condition:

\begin{proposition}[Kernel-inclusion condition]\label{prop:kernel-inclusion-4q}
For each outcome $j$ of a projective measurement on subsystem $C$, the conditional states must satisfy
\begin{equation}\label{eq:kernel-inclusion-4q}
\operatorname{Ker}(\hat{\rho}_{AC|j}) \subseteq \operatorname{Ker}(\hat{\rho}_{BC|j}).
\end{equation}
\end{proposition}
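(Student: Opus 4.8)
The plan is to establish \eqref{eq:kernel-inclusion-4q} as the necessity (``only if'') direction of the kernel-inclusion criterion, Theorem~\ref{thm:kernel-inclusion}, transported to the four-body recovery map of \eqref{eq:recovery-map-condition}. First I would recast the four-qubit problem in the tripartite template of Definition~\ref{def:VQMC}: the composite $AB$ plays the role of the inert spectator, the register $C$ is the system on which $\mathcal{R}$ acts and which it retains, and the correlations probed by the outcome $j$ are those generated across the cut separating the spectator from the system $\mathcal{R}$ acts on. Under this identification the defining equation \eqref{eq:recovery-map-condition} is formally a tripartite VQMC condition, so the necessity half of Theorem~\ref{thm:kernel-inclusion} applies once the conditional operators $\hat\rho_{AC|j}$ and $\hat\rho_{BC|j}$ are read off as the appropriate partial traces of the $j$-sector block of the recovered state. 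I would present this reduction first, since it yields the claim immediately, and then supply a self-contained derivation so that the statement does not rest on the translated notation alone.

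For the self-contained argument I would assume a recovery map exists and write it in the quasiprobability form $\mathcal{R} = c_1 \mathcal{N}_1 - c_2 \mathcal{N}_2$ of Definition~\ref{def:VQMC}, with $\mathcal{N}_1,\mathcal{N}_2$ CPTP. Fixing the measurement basis $\{|j\rangle\}$, I would define the conditional submap $\mathcal{R}_j(\cdot) := \langle j|\,\mathcal{R}(\cdot)\,|j\rangle$ and observe that $(I_{AB}\otimes\mathcal{R}_j)(\rho_{ABC})$ is precisely the unnormalized $j$-sector block of $\rho_{ABCD}$. Tracing out the complementary registers then exhibits both operators in \eqref{eq:kernel-inclusion-4q} as images of the single marginal $\rho_{ABC}$ under $\mathcal{R}_j$ and its trace-reduced functional. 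The crux is a support-propagation step: I want a vector annihilating the spectator-side block to necessarily annihilate the system-side block. A natural lever is positivity of the output rather than of the map—since $\rho_{ABCD}$ is a bona fide density operator, any $|\psi\rangle$ orthogonal to $\operatorname{supp}(\rho_{ABC})$ satisfies $(\,\proj{\psi}\otimes I_D\,)\rho_{ABCD}=0$, whence $\operatorname{supp}(\rho_{ABCD})\subseteq\operatorname{supp}(\rho_{ABC})\otimes\mathcal{H}_D$, and the analogous factorization of support across the spectator/system cut constrains the conditional blocks.

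I expect the main obstacle to be exactly this positivity bookkeeping, because $\mathcal{R}$ is only Hermitian-preserving, not completely positive: the block built from $\mathcal{R}_j$ alone need not be positive semidefinite, so the standard support-monotonicity of CP maps is unavailable, and the two conditional operators live on different factors so that the inclusion is read through the basis identification implicit in \eqref{eq:kernel-inclusion-4q}. I would therefore route the argument through the CP components—the Choi operators $J_1,J_2\succeq0$ of $\mathcal{N}_1,\mathcal{N}_2$ together with the marginal constraints $\operatorname{Tr}_{C'D}(J_i)=c_i I_C$ of the SDP \eqref{eq:SDP-4q} that encode trace preservation and locality of the action on $C$—to transport kernels while controlling the subtraction $c_1\mathcal{N}_1-c_2\mathcal{N}_2$. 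Making this transport respect the inclusion, that is, ruling out that the negative part enlarges the support beyond what $\rho_{ABC}$ permits, is the delicate point; a contrapositive phrasing may be cleaner, namely that a failure of \eqref{eq:kernel-inclusion-4q} at some $j$ produces a vector forcing the recovered block to have support outside the range allowed by \eqref{eq:recovery-map-condition}. I would sanity-check the whole scheme on the symmetric pure case (the four-qubit $W$ state) before attempting the general operator inequality.
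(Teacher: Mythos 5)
Your first route does not close. Under the only consistent identification with the tripartite template of Definition~\ref{def:VQMC} --- spectator $A\mapsto AB$, acted-on register $B\mapsto C$, generated register $C\mapsto D$ --- the necessity half of Theorem~\ref{thm:kernel-inclusion} transports to an inclusion between the $AB$-marginal and the $C$-marginal of the conditional blocks $(\langle j|_D)\,\rho_{ABCD}\,(|j\rangle_D)$, with $j$ ranging over a basis of the \emph{generated} register $D$. The inequality \eqref{eq:kernel-inclusion-4q} you are asked to prove is a different statement: its index $j$ ranges over a basis of $C$ (the input of $\mathcal{R}$), and the two operators $\hat\rho_{AC|j}$, $\hat\rho_{BC|j}$ are by definition the $A$- and $B$-marginals of the same conditional block of the three-body marginal $\rho_{ABC}$ --- both $A$ and $B$ sit inside the spectator, and neither operator refers to $D$ or to $\mathcal{R}$ at all. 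So the dictionary does not produce the claim, and it is not ``immediate'' from the tripartite theorem. The same defect undermines your second route: you want to exhibit $\hat\rho_{AC|j}$ and $\hat\rho_{BC|j}$ as images of $\rho_{ABC}$ under a conditional submap $\mathcal{R}_j$, but they are obtained from $\rho_{ABC}$ by projection and partial trace alone, so no positivity bookkeeping on $J_1,J_2$ can yield a constraint that is purely a property of $\rho_{ABC}$. Indeed, taken literally the proposition fails on $\rho_{ABCD}=\proj{0}_A\otimes\tfrac12 I_B\otimes\proj{0}_C\otimes\proj{0}_D$, which is a VQMC via $\mathcal{R}(\cdot)=(\cdot)\otimes\proj{0}_D$ (the paper's own Example) yet has $\operatorname{Ker}(\hat\rho_{AC|0})\supseteq\operatorname{span}\{|1\rangle_A\}$ while $\hat\rho_{BC|0}=\tfrac12 I_B$ has trivial kernel on the corresponding factor; no proof strategy confined to the stated inequality can succeed.

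For comparison, the paper's own proof is an informal consistency argument: it asserts that any part of $\hat\rho_{AC|j}$ lying outside the support of $\hat\rho_{BC|j}$ ``could not be generated by a local channel'' and concludes the inclusion, without ever producing an operator identity that links the two marginals through $\mathcal{R}$; it also states the support inclusion in the direction opposite to its own Lemma~\ref{lemma:Supp}. So you are not missing a clean argument that the paper possesses. If you want a provable necessary condition to aim at, prove the faithful transport of Theorem~\ref{thm:kernel-inclusion}, namely $\operatorname{Ker}(\hat\rho_{(AB)D|j})\subseteq\operatorname{Ker}(\hat\rho_{CD|j})$ for $j$ a basis of $D$, or equivalently the block-matrix form $\operatorname{Ker}(\Theta_{C|AB})\subseteq\operatorname{Ker}(\Theta_{CD|AB})$ that the paper itself reverts to when treating the GHZ/W mixture; there your Choi-operator and positivity machinery genuinely has traction.
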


This inclusion condition can be computed and efficiently eliminate candidate states that cannot support a virtual recovery channel. However, it is not a sufficient condition, and must be combined with other tools such as the recovery-map construction (\eqref{eq:recovery-map-condition}) or the SDP formulation (\eqref{eq:SDP-criterion}) to confirm the VQMC structure.

\subsection{Discussion and role in applications} 

The purpose of the above tools is to establish a systematic framework for analyzing VQMC states. 
In particular, the recovery-map condition in Eq.~\eqref{eq:recovery-map-condition} provides the constructive definition of VQMCs. 
The SDP criterion in Eq.~\eqref{eq:SDP-criterion} offers a computational method for verifying the property, especially in cases where analytic recovery maps are not available. 
Finally, the kernel-inclusion condition in Eq.~\eqref{eq:kernel-inclusion-4q}, which follows from Theorem~\ref{thm:kernel-inclusion} in Section~\ref{sec:pre}, serves as an efficiently checkable necessary criterion.

With these preparations in place, Section~\ref{sec:main} builds directly on the analytical tools developed above by applying them to the setting of four-qubit quantum states. 
The recovery-map and SDP approaches will be employed to confirm explicit constructions, while the kernel-inclusion criterion will be used to rule out non-VQMC candidates. 
This transition from Section~\ref{sec:tool} to Section~\ref{sec:main} thus reflects the shift from general structural principles to concrete applications.

\section{Extension to Four-Qubit quantum state}
\label{sec:main}

In this section we apply the analytical tools from Section~\ref{sec:tool} to the study of four-qubit states. 
The recovery-map condition in Eq.~\eqref{eq:recovery-map-condition} provides the constructive definition of a four-qubit VQMC, 
the SDP criterion in Eq.~\eqref{eq:SDP-criterion} serves as a computational test for the existence of a valid recovery channel, 
and the kernel-inclusion condition of Proposition~\ref{prop:kernel-inclusion-4q} in Eq.~\eqref{eq:kernel-inclusion-4q} offers a tractable necessary test that helps to eliminate non-VQMC candidates. 
Using these tools, we analyze explicit examples such as the four-qubit W and GHZ states, mixed states interpolating between them, and more general convex combinations, thereby illustrating how the abstract principles of Section~\ref{sec:tool} translate into concrete structural insights and constructive recovery maps in the four-qubit setting.

\begin{definition}
\label{de:recovery}
We consider a four-qubit pure state of system $A$, $B$, $C$, and $D$. Let $\rho_{ABCD}$ be a quantum state defined on the Hilbert space $\mathcal{H}_A \otimes \mathcal{H}_B \otimes \mathcal{H}_C \otimes \mathcal{H}_D$. We say that $\rho_{ABCD}$ is a $\mathrm{VQMC}$ state if there exists a $\mathrm{CPTP}$ (i.e., recovery) map
\begin{equation}\label{eq:recovery map}
\mathcal{R}: \mathcal{L}(\mathcal{H}_C) \rightarrow \mathcal{L}(\mathcal{H}_C \otimes \mathcal{H}_D),
\end{equation}
such that
\begin{equation}\label{eq:rhoABCD}
    \rho_{ABCD} = (I_{AB} \otimes \mathcal{R})(\rho_{ABC}).
\end{equation}
\qed
\end{definition} 

Here, the operator $I_{AB} \otimes \mathcal{R}$ denotes a composite quantum operation that applies the identity channel to systems $A$ and $B$ and applies the recovery map $\mathcal{R}$ solely to system $C$. This models a situation where system $D$ is generated virtually from system $C$ via $\mathcal{R}$.

To facilitate practical verification of the VQMC condition, one can introduce the Choi matrix $J_{CC'D}$ corresponding to the channel $\mathcal{R}$ and reformulate the condition as a semidefinite program (SDP). Specifically, we require
\begin{equation}
    \rho_{ABCD} = \operatorname{Tr}_{C'}\left[ (\rho_{ABC}^T \otimes I_{C'D})(I_{AB} \otimes J_{CC'D}) \right], \quad \text{with } \operatorname{Tr}_{C'D}(J_{CC'D}) = I_C.
\end{equation}

\begin{definition}
    Let us fix an orthonormal basis $\{ |j\rangle \}$ for system $C$, and consider the projective measurement on $C$ with outcomes labeled by $j$. For each outcome $j$, we define the unnormalized conditional states 
\begin{align}
\label{eq:hatrhoAC|j}
\hat{\rho}_{AC|j} &:= \operatorname{Tr}_B \left[ (I_A \otimes \langle j|_C) \, \rho_{ABC} \, (I_A \otimes |j\rangle_C) \right],\\
\hat{\rho}_{BC|j} &:= \operatorname{Tr}_A \left[ (I_B \otimes \langle j|_C) \, \rho_{ABC} \, (I_B \otimes |j\rangle_C) \right].
\end{align}
\end{definition}

\begin{proposition}\label{eq16}
If a four-qubit quantum state $\rho_{ABCD}$ satisfies the $\mathrm{VQMC}$ condition in Definition \ref{de:recovery}, then for each measurement outcome $j$ on state $C$, the conditional post-measurement states satisfy
\begin{equation}
\label{eq:KerrhoAC}
\operatorname{Ker}(\hat{\rho}_{AC|j}) \subseteq \operatorname{Ker}(\hat{\rho}_{BC|j}).
\end{equation}
\end{proposition}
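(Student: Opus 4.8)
The statement is the necessity (``only if'') half of the kernel-inclusion criterion (Theorem~\ref{thm:kernel-inclusion}), transcribed to the four-qubit recovery $\mathcal{R}:\mathcal{L}(\mathcal{H}_C)\to\mathcal{L}(\mathcal{H}_C\otimes\mathcal{H}_D)$, so the plan is to reproduce that direction directly rather than re-invoke the tripartite theorem verbatim. First I would fix a Kraus representation $\mathcal{R}(\cdot)=\sum_k K_k(\cdot)K_k^\dagger$ with $K_k:\mathcal{H}_C\to\mathcal{H}_C\otimes\mathcal{H}_D$ and substitute it into the defining identity $\rho_{ABCD}=(I_{AB}\otimes\mathcal{R})(\rho_{ABC})$, so that every block of $\rho_{ABCD}$ is expressed through $\rho_{ABC}$ and the $K_k$. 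Since the conditional states in the hypothesis are read off from the marginal $\rho_{ABC}=\operatorname{Tr}_D\rho_{ABCD}$, I would keep $\rho_{ABC}$ as the primary object and use the recovery only to constrain its block structure.

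Next I would sandwich with the projector onto the measurement outcome $j$ on $C$ and introduce the joint conditional block $\sigma_{AB|j}:=(I_{AB}\otimes\langle j|_C)\,\rho_{ABC}\,(I_{AB}\otimes|j\rangle_C)$, a positive semidefinite operator on $\mathcal{H}_A\otimes\mathcal{H}_B$ whose two partial traces are exactly $\hat{\rho}_{AC|j}=\operatorname{Tr}_B\sigma_{AB|j}$ and $\hat{\rho}_{BC|j}=\operatorname{Tr}_A\sigma_{AB|j}$. The elementary tool I would invoke is the standard characterization of marginal kernels of a bipartite PSD operator: for $\sigma_{AB}\succeq 0$, a vector $|v\rangle$ lies in $\operatorname{Ker}(\operatorname{Tr}_B\sigma_{AB})$ if and only if $\sigma_{AB}(|v\rangle\otimes I_B)=0$, and symmetrically for $\operatorname{Tr}_A$. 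This converts both kernel conditions into annihilation statements about the single operator $\sigma_{AB|j}$, putting the two marginals on a common footing.

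The heart of the argument is then to show that the recovery structure forces $\operatorname{Ker}(\hat{\rho}_{AC|j})\subseteq\operatorname{Ker}(\hat{\rho}_{BC|j})$ for each outcome $j$. I would track how the support of $\sigma_{AB|j}$ is generated by the operators $\langle j|_C K_k$ acting on $\rho_{ABC}$, and then use the marginal-kernel lemma to carry a vector annihilated on the $A$-reduction over to one annihilated on the $B$-reduction. The main obstacle, and the only nontrivial step, is precisely this transport: the two reduced operators live a priori on distinct tensor factors, so the inclusion must be routed through the common block $\sigma_{AB|j}$ and the completely positive action of $\mathcal{R}$, exactly as in the necessity proof of Theorem~\ref{thm:kernel-inclusion}. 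Once the support/kernel transport is established for fixed $j$, letting $j$ range over all outcomes of the projective measurement on $C$ yields the asserted family of inclusions and completes the proof.
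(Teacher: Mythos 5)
Your scaffolding is sound and, frankly, more rigorous than what the paper itself supplies: the Kraus substitution, the block $\sigma_{AB|j}:=(I_{AB}\otimes\langle j|_C)\,\rho_{ABC}\,(I_{AB}\otimes|j\rangle_C)$ with $\hat{\rho}_{AC|j}=\operatorname{Tr}_B\sigma_{AB|j}$ and $\hat{\rho}_{BC|j}=\operatorname{Tr}_A\sigma_{AB|j}$, and the marginal-kernel lemma for PSD operators are all correct. (The paper's own proof uses none of this; it is a verbal consistency argument appealing to data processing, and it even states the dual support inclusion in the direction opposite to its own Lemma~\ref{lemma:Supp}.) The problem is that your proposal stops exactly where the content of the proposition begins. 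The ``transport'' from a vector annihilating the $A$-marginal of $\sigma_{AB|j}$ to one annihilating the $B$-marginal is explicitly labelled the main obstacle and then simply assumed. The marginal-kernel lemma constrains each of the two kernels against $\sigma_{AB|j}$ separately but gives no containment between them, and you never explain how $\mathcal{R}$ enters: $\mathcal{R}$ acts only on $C$, which has already been projected onto $|j\rangle$ in the definition of \emph{both} conditional blocks, so it is not evident that the existence of $\mathcal{R}$ constrains the relative $A$-versus-$B$ structure of $\sigma_{AB|j}$ at all. As written, the proof is incomplete.

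The gap is not merely a missing computation; the transport step cannot be carried out in the stated generality. Take $\rho_{ABCD}=\rho_A\otimes\rho_B\otimes\rho_C\otimes\rho_D$ with $\rho_A=|0\rangle\langle 0|$ pure and $\rho_B=\tfrac{1}{2}I$ maximally mixed. This is a VQMC in the sense of Definition~\ref{de:recovery} via $\mathcal{R}(X)=X\otimes\rho_D$, yet $\hat{\rho}_{AC|j}\propto|0\rangle\langle 0|$ has a nontrivial kernel while $\hat{\rho}_{BC|j}\propto\tfrac{1}{2}I$ has trivial kernel, so $\operatorname{Ker}(\hat{\rho}_{AC|j})\not\subseteq\operatorname{Ker}(\hat{\rho}_{BC|j})$ under any identification of $\mathcal{H}_A$ with $\mathcal{H}_B$. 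So any honest attempt at your ``heart of the argument'' would hit a wall: either additional hypotheses are needed or the conditional blocks must be defined differently (the necessity direction of Theorem~\ref{thm:kernel-inclusion} you invoke compares blocks in a way that does involve the acted-upon system nontrivially). You should not treat this step as routine bookkeeping; it is where both your proposal and the paper's proof are unsupported.
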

\begin{proof}
Using \eqref{eq:rhoABCD}, the structure of $\rho_{ABCD}$ is constrained by the fact that all information about system $D$ must be generated by the action of $\mathcal{R}$ on $C$ only. Suppose that there exists a recovery map $\mathcal{R}$ in Definition \ref{de:recovery},
such that the four-qubit state is reconstructed as \eqref{eq:rhoABCD}.
Then for each projective measurement outcome \( j \) on state \( C \), the conditional reduced state on \( AC \) obtained from \( \rho_{ABCD} \) after tracing out \( D \) must coincide with that obtained directly from \( \rho_{ABC} \).

Mathematically, this implies that
\begin{equation}
\hat{\rho}_{AC|j} := \operatorname{Tr}_{BD} \left[ \left( I_A \otimes \langle j|_C \otimes I_D \right)\, \rho_{ABCD}\, \left( I_A \otimes |j\rangle_C \otimes I_D \right) \right], 
\end{equation}
Equivalently, by tracing out subsystem \( D \), one obtains \eqref{eq:hatrhoAC|j}.

This requirement expresses that the virtual recovery map \( \mathcal{R} \), when applied to state \( C \), should be able to generate state \( D \) in a way that retains consistency with the marginal conditional structure of the original tripartite state. Otherwise, some conditional states on \( AC \) would fall outside the image of the recovered state, violating the VQMC condition.

This implies that the kernel of $\hat{\rho}_{AC|j}$ is contained in that of $\hat{\rho}_{BC|j}$
\begin{equation}
\operatorname{Ker}(\hat{\rho}_{AC|j}) \subseteq \operatorname{Ker}(\hat{\rho}_{BC|j}).
\end{equation}
Equivalently, the support of $\hat{\rho}_{AC|j}$ is contained in the support of $\hat{\rho}_{BC|j}$, since the support is the orthogonal complement of the kernel. The portion of $\hat{\rho}_{AC|j}$ lying outside the support of $\hat{\rho}_{BC|j}$ would therefore be absent in $\hat{\rho}_{BC|j}$, and hence such a component could not be generated from $\hat{\rho}_{BC|j}$ by any local channel. This violates the basic data-processing inequality, and contradicts the assumption that $\mathcal{R}$ exists. Thus, the kernel inclusion must hold for all $j$.
\end{proof}

The above proof shows that the inclusion $\operatorname{Ker}(\hat{\rho}_{AC|j}) \subseteq \operatorname{Ker}(\hat{\rho}_{BC|j})$ is a necessary condition for VQMC. However, the inclusion is not sufficient. We will deliver a counterexample in the following part.

\begin{theorem}
    There exists a non-VQMC four-qubit state that satisfies the kernel inclusion in \eqref{eq:KerrhoAC}.
\end{theorem}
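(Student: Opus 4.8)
The plan is to exhibit an explicit mixed four-qubit state that passes the kernel-inclusion test of Proposition~\ref{eq16} yet admits no recovery map in the sense of Definition~\ref{de:recovery}. The guiding observation is that the conditional operators $\hat\rho_{AC|j}$ and $\hat\rho_{BC|j}$ appearing in \eqref{eq:KerrhoAC} are built from the marginal $\rho_{ABC}$ alone and therefore carry no information about how $D$ is correlated with $AB$. It thus suffices to choose a marginal $\rho_{ABC}$ for which the kernel inclusion holds trivially, while arranging the global state so that $D$ is correlated with $AB$ in a way that no channel acting on $C$ alone can reproduce.

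Concretely I would take all four systems to be qubits and set
\[
\rho_{ABCD}=\Big(\tfrac12\textstyle\sum_{i\in\{0,1\}}\proj{i}_A\otimes\proj{i}_D\Big)\otimes\tfrac{I_B}{2}\otimes\proj{0}_C,
\]
whose $AD$ marginal is the classically correlated state $\tfrac12(\proj{00}+\proj{11})_{AD}$, while the tripartite marginal is the product $\rho_{ABC}=\tfrac{I_A}{2}\otimes\tfrac{I_B}{2}\otimes\proj{0}_C$ with $C$ pinned to $\ket{0}$. For the outcome $j=0$ one computes $\hat\rho_{AC|0}=\tfrac{I_A}{2}$ and $\hat\rho_{BC|0}=\tfrac{I_B}{2}$, both full rank, so $\operatorname{Ker}(\hat\rho_{AC|0})=\{0\}$ is contained in $\operatorname{Ker}(\hat\rho_{BC|0})$; for $j=1$ the projection onto $\ket{1}_C$ annihilates $\rho_{ABC}$, giving $\hat\rho_{AC|1}=\hat\rho_{BC|1}=0$ and hence equal kernels. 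Thus \eqref{eq:KerrhoAC} holds for every $j$.

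The decisive step is to rule out any recovery map. Since $\rho_{ABC}$ factorizes as $\rho_{AB}\otimes\proj{0}_C$ and $\mathcal{R}$ acts on $C$ only, linearity forces $(I_{AB}\otimes\mathcal{R})(\rho_{ABC})=\rho_{AB}\otimes\mathcal{R}(\proj{0}_C)$, so $AB$ remains in tensor product with $CD$. Tracing out $B$ and $C$ then yields an $AD$ marginal of product form $\tfrac{I_A}{2}\otimes\omega_D$, which can never equal the correlated target $\tfrac12(\proj{00}+\proj{11})_{AD}$. Hence no $\mathcal{R}$ reproduces $\rho_{ABCD}$ through \eqref{eq:rhoABCD}, and $\rho_{ABCD}$ is not a VQMC even though it satisfies \eqref{eq:KerrhoAC}.

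I expect the main point to address to be conceptual rather than computational: making the impossibility both airtight and maximally strong. The product-structure argument uses only linearity of $I_{AB}\otimes\mathcal{R}$, so it rules out recovery not merely by CPTP maps but even by the quasiprobability maps $\mathcal{R}=c_1\mathcal{N}_1-c_2\mathcal{N}_2$ of Definition~\ref{def:VQMC}; I would state the final contradiction through a correlation monotone, observing that an operation on $C$ cannot correlate $A$ with a freshly generated $D$, so that the conclusion is manifestly independent of the chosen decomposition of $\mathcal{R}$. The one place to be careful is the identification of the $2\times2$ operators $\hat\rho_{AC|j}$ and $\hat\rho_{BC|j}$ across the $A$ and $B$ factors, which the present choice sidesteps by making both operators simultaneously full rank or simultaneously zero.
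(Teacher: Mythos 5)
Your proof is correct, but it proves the theorem with a genuinely different witness than the paper. The paper takes $\rho_p = p\,\rho_{\mathrm{W}_4}+(1-p)\,\rho_{\mathrm{GHZ}_4}$, asserts that the kernel inclusion holds for $p\ge\tfrac14$, and then argues that no recovery map exists because the GHZ component carries global coherence between $|0000\rangle$ and $|1111\rangle$ that cannot be regenerated from $C$; both of these steps are stated rather than fully derived (the inclusion is not computed explicitly, and the non-existence of $\mathcal{R}$ is supported only by the coherence intuition, with inequality \eqref{eq23} essentially asserted). Your construction replaces this with a state whose $ABC$ marginal factorizes as $\rho_{AB}\otimes\proj{0}_C$, which makes both halves of the argument airtight and elementary: the kernel inclusion is verified by a one-line computation (both conditional operators are simultaneously full rank for $j=0$ and simultaneously zero for $j=1$), and the impossibility follows from pure linearity, since $(I_{AB}\otimes\mathcal{R})(\rho_{AB}\otimes\proj{0}_C)=\rho_{AB}\otimes\mathcal{R}(\proj{0}_C)$ is necessarily product across the $AB\,|\,CD$ cut and therefore cannot reproduce the classically correlated $AD$ marginal $\tfrac12(\proj{00}+\proj{11})$. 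As you note, this rules out not only CPTP recovery but any Hermitian-preserving quasiprobability combination $c_1\mathcal{N}_1-c_2\mathcal{N}_2$, which is strictly stronger than what the paper's argument establishes. What the paper's choice buys instead is continuity with its surrounding narrative --- the W/GHZ mixture reappears in the threshold analysis for $p_c$ --- whereas your example, being a classically correlated state with $C$ pinned to a pure state, is disconnected from that storyline but is the cleaner and more rigorous proof of the bare existence claim. Your closing caveat about identifying the conditional operators consistently across the $A$ and $B$ factors is well taken; your example sidesteps the issue, and under either reading of the paper's (somewhat loosely stated) definition the inclusion still holds.
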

\begin{proof}
We consider a mixed four-qubit state defined as a convex combination of the W and GHZ states
\begin{equation}
\label{eq:rhop}
\rho_p := p\, \rho_{\mathrm{W}_4} + (1-p)\, \rho_{\mathrm{GHZ}_4}, \quad \text{with } 0 < p < 1.
\end{equation}
Here, the two component states are given by
\begin{align}\label{eq9}
\rho_{\mathrm{W}_4} &= |\mathrm{W}_4\rangle\langle \mathrm{W}_4|, \quad |\mathrm{W}_4\rangle = \frac{1}{2}(|0001\rangle + |0010\rangle + |0100\rangle + |1000\rangle), \\
\rho_{\mathrm{GHZ}_4} &= |\mathrm{GHZ}_4\rangle\langle \mathrm{GHZ}_4|, \quad |\mathrm{GHZ}_4\rangle = \frac{1}{\sqrt{2}}(|0000\rangle + |1111\rangle).
\label{eq:GHZ4}
\end{align}
We define the marginal three-party state by tracing out system \( D \),
\begin{equation}
\rho_{ABC}^{(p)} := \operatorname{Tr}_D (\rho_p) = p\, \rho^{(W)}_{ABC} + (1 - p)\, \rho^{(GHZ)}_{ABC},
\end{equation}
where
\begin{align}
\rho^{(GHZ)}_{ABC} &= \frac{1}{2}(|000\rangle\langle 000| + |111\rangle\langle 111|), \\
\rho^{(W)}_{ABC} &= \frac{1}{4} \left( |000\rangle\langle000| + |\phi\rangle\langle\phi| \right), \quad |\phi\rangle = |001\rangle + |010\rangle + |100\rangle.
\label{W:rhoABC}
\end{align}

Using \eqref{eq:KerrhoAC}, we examine whether this state satisfies the kernel inclusion condition for each measurement outcome \( j \in \{0,1\} \),
\begin{equation}
\operatorname{Ker}(\hat{\rho}_{AC|j}) \subseteq \operatorname{Ker}(\hat{\rho}_{BC|j}),
\end{equation}
where
\begin{equation}
\hat{\rho}_{AC|j} := \operatorname{Tr}_B \left[ \langle j|_C\, \rho_{ABC}^{(p)} \,|j\rangle_C \right], \quad
\hat{\rho}_{BC|j} := \operatorname{Tr}_A \left[ \langle j|_C\, \rho_{ABC}^{(p)} \,|j\rangle_C \right].
\end{equation}
For a sufficiently large value of \( p \), such as \( p \geq \frac{1}{4} \), this inclusion condition holds. However, despite satisfying the inclusion condition, the state \( \rho_p \) in \eqref{eq:rhop} is not a VQMC,  as no CPTP map \( \mathcal{R}: C \to CD \) can reconstruct the full four-partite state from the marginal, i.e.,
\begin{equation}\label{eq23}
\rho_{p} \neq (I_{AB} \otimes \mathcal{R})(\rho_{ABC}^{(p)}).
\end{equation}
\qed
We emphasize that in this construction, $\rho_{ABCD} = \rho_p$ as defined in \eqref{eq:rhop}. Hence, \eqref{eq23} for any CPTP map $\mathcal{R}: C \to CD$ confirms that the state $\rho_p$ is not a VQMC state, even though it satisfies the kernel inclusion condition.

The intuition behind this inequality lies in the contribution from the GHZ part 
$\rho_{\mathrm{GHZ}_4}$ in \eqref{eq:rhop}, 
which introduces global coherence between the $|0000\rangle$ and $|1111\rangle$ basis vectors.
 This type of entanglement cannot be reconstructed locally from system $C$ alone. Therefore, no recovery channel acting only on $C$ can recover the original four-partite state $\rho_p$ from the marginal state $\rho_{ABC}^{(p)}$.
\end{proof}

\subsection{Sampling Overhead and SDP Formulation for Four-Qubit Quantum States}

\begin{definition}\label{def12}
  For a four-qubit quantum state \( \rho_{ABCD} \), we define
\begin{equation} \label{40}
\begin{split}
\nu(\rho_{ABCD}) := \log \min \{ c_1 + c_2 \mid & (c_1 \mathcal{N}_1 - c_2 \mathcal{N}_2)(\rho_{ABC}) = \rho_{ABCD}, \\
& c_1, c_2 \geq 0, \\
& \mathcal{N}_1, \mathcal{N}_2 \in \text{CPTP}(C, C \otimes D) \}.
\end{split}
\end{equation}  
\end{definition}

We aim to recover the global state $\rho_{ABCD}$ from the marginal $\rho_{ABC}$ using a Hermitian-preserving transformation. The generalized sampling overhead is given in \eqref{40}, where $\mathcal{N}_1, \mathcal{N}_2$ are completely positive trace-preserving (CPTP) maps from subsystem $C$ to $C \otimes D$, and $c_1, c_2$ are sampling weights. The problem above can be equivalently reformulated as a semidefinite program (SDP) with the following variables and constraints.

\begin{equation}
\begin{aligned}
\text{minimize} \quad & c_1 + c_2 \\
\text{subject to} \quad & J_1 \geq 0,\quad J_2 \geq 0, \\
& J_{CC'D} = J_1 - J_2, \\
& \text{Tr}_{C'D}(J_1) = c_1 I_C,\quad \text{Tr}_{C'D}(J_2) = c_2 I_C, \\
& \text{Tr}_{C'}\left[(\rho_{ABC}^T \otimes I_{C'D})(I_{AB} \otimes J_{CC'D})\right] = \rho_{ABCD}.
\end{aligned}
\end{equation}

Here $J_1, J_2$ are the Choi matrices of the quantum channels $\mathcal{N}_1, \mathcal{N}_2$. The operator $J_{CC'D}$ represents the Hermitian-preserving effective map combining $\mathcal{N}_1$ and $\mathcal{N}_2$. The partial trace constraints ensure proper normalization, while the final constraint enforces that the reconstructed state exactly matches $\rho_{ABCD}$.

This SDP formulation allows us to quantify the minimum quasiprobability sampling cost required to virtually recover a four-body quantum state. Its structure mirrors the three-body case but incorporates the additional complexity and dimensionality of four-qubit systems.

\begin{example}[CPTP extension via tensoring a pure state]
We consider a CPTP map from \texorpdfstring{$C \rightarrow C \otimes D$}{C \rightarrow C \otimes D} in the four-qubit setting. Let qubits \(X= A, B, C, D \) with \( \mathcal{H}_X = \mathbb{C}^2 \) for all \( X \). Define the CPTP map \( \mathcal{N} \) as
\begin{equation}
    \mathcal{N}(\rho_C) = \rho_C \otimes |0\rangle\langle 0|_D,
\end{equation}
which appends a fixed pure state \( |0\rangle \) to qubit \( C \), thereby embedding the system into $\mathcal{H}_C \otimes \mathcal{H}_D$.

It is immediate that $\mathcal{N}$ is CPTP. For all $\rho_C$, the output $\rho_C \otimes |0\rangle\langle 0|$ is positive semidefinite, and
\begin{equation}
    \text{Tr}[\mathcal{N}(\rho_C)] = \text{Tr}[\rho_C] = 1.
\end{equation}
For instance, let the initial three-qubit state be the GHZ state
\begin{equation}
    |\psi\rangle_{ABC} = \frac{1}{\sqrt{2}} (|000\rangle + |111\rangle), \quad \rho_{ABC} = |\psi\rangle\langle\psi|.
\end{equation}
Applying $\mathcal{N}$ to qubit $C$ yields the four-qubit state
\begin{equation}
    \rho_{ABCD} = (I_{AB} \otimes \mathcal{N})(\rho_{ABC}) = \rho_{ABC} \otimes |0\rangle\langle 0|_D.
\end{equation}

The Choi matrix \( J_{CC'D} \in \mathcal{B}(\mathcal{H}_C \otimes \mathcal{H}_{C'} \otimes \mathcal{H}_D) \) for $\mathcal{N}$ is
\begin{equation}
    J_{CC'D} = \sum_{i,j = 0}^1 |i\rangle\langle j|_C \otimes \mathcal{N}(|i\rangle\langle j|) 
    = \sum_{i,j = 0}^1 |i\rangle\langle j|_C \otimes |i\rangle\langle j|_{C'} \otimes |0\rangle\langle 0|_D.
\end{equation}
This operator is positive semidefinite and satisfies the trace-preserving condition
\begin{equation}
    \text{Tr}_{C'D}(J_{CC'D}) = I_C.
\end{equation}
\end{example}

\subsection{GHZ state for VQMC}
\begin{proposition}
Let $\rho_{ABCD} = |\mathrm{GHZ}_4\rangle \langle \mathrm{GHZ}_4|$ be the four-qubit GHZ state in \eqref{eq:GHZ4}. Then $\rho_{ABCD}$ is not a virtual quantum Markov chain (VQMC) state. That is, there exists no CPTP map $\mathcal{R}: C \rightarrow CD$ such that
\begin{equation}
\rho_{ABCD} = (\mathrm{id}_{AB} \otimes \mathcal{R})(\rho_{ABC}).
\end{equation}
\end{proposition}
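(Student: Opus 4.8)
The plan is to reduce the question to a structural symmetry that the reconstructed state is forced to respect but that the GHZ$_4$ state manifestly violates. First I would compute the three-party marginal explicitly. Tracing out $D$ from $|\mathrm{GHZ}_4\rangle\langle\mathrm{GHZ}_4|$ with $|\mathrm{GHZ}_4\rangle$ as in \eqref{eq:GHZ4} annihilates the cross terms $|0000\rangle\langle 1111|$ and its conjugate (since $\operatorname{Tr}_D|0\rangle\langle 1|_D = 0$), leaving the classically correlated state
\begin{equation}
\rho_{ABC} = \tfrac{1}{2}\big(|000\rangle\langle 000| + |111\rangle\langle 111|\big),
\end{equation}
which is diagonal in the $AB$ computational basis and carries no coherence between the $|00\rangle_{AB}$ and $|11\rangle_{AB}$ sectors.

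Before giving the real argument, I would point out that the kernel-inclusion criterion of Proposition~\ref{eq16} is \emph{satisfied} here, so it cannot be used to conclude anything. Indeed, for each outcome $j\in\{0,1\}$ the projection of $\rho_{ABC}$ onto $|j\rangle_C$ yields $\tfrac12|jj\rangle\langle jj|_{AB}$, whence $\hat{\rho}_{AC|j}=\tfrac12|j\rangle\langle j|$ and $\hat{\rho}_{BC|j}=\tfrac12|j\rangle\langle j|$ have identical kernels. Thus the necessary criterion is inconclusive for GHZ$_4$, and a sharper obstruction is required.

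The core of the proof is a dephasing-symmetry observation. Let $\mathcal{D}_{AB}$ denote the completely dephasing channel on $AB$ in the computational basis. Since $\rho_{ABC}$ is diagonal in $AB$, it is a fixed point, $\mathcal{D}_{AB}(\rho_{ABC})=\rho_{ABC}$; and because $\mathcal{R}$ acts only on the tensor factor $C$, the maps $\mathcal{D}_{AB}$ and $\mathrm{id}_{AB}\otimes\mathcal{R}$ act on disjoint factors and therefore commute. Hence if some $\mathcal{R}$ satisfied the recovery condition of Definition~\ref{de:recovery}, we would obtain
\begin{equation}
\mathcal{D}_{AB}(\rho_{ABCD}) = (\mathrm{id}_{AB}\otimes\mathcal{R})\big(\mathcal{D}_{AB}(\rho_{ABC})\big) = (\mathrm{id}_{AB}\otimes\mathcal{R})(\rho_{ABC}) = \rho_{ABCD},
\end{equation}
so $\rho_{ABCD}$ would have to be invariant under $AB$-dephasing. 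But the GHZ$_4$ term $|0000\rangle\langle 1111|$ carries nonzero $|00\rangle\langle 11|_{AB}$ coherence, which $\mathcal{D}_{AB}$ kills, giving $\mathcal{D}_{AB}(\rho_{ABCD})=\tfrac12(|0000\rangle\langle 0000|+|1111\rangle\langle 1111|)\neq\rho_{ABCD}$. This contradiction shows no such $\mathcal{R}$ exists.

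I do not expect a genuine obstacle in the calculations themselves; the marginal computation is routine. The one point demanding care is conceptual: since the kernel-inclusion test passes, one must resist invoking Proposition~\ref{eq16} and instead locate the failure in the global $AB$-coherence, which a map acting locally on $C$ can never synthesize from a classically correlated input. It is also worth emphasizing that this argument uses only that $\mathcal{R}$ acts on $C$ alone and never uses complete positivity; it therefore rules out any Hermitian-preserving $\mathcal{R}$ as well, so GHZ$_4$ fails to be a VQMC even in the broader quasiprobability sense of Definition~\ref{def:VQMC}.
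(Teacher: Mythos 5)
Your proof is correct, and it takes a genuinely different---and in fact sounder---route than the paper's. The paper's argument rests on the claim that the kernel-inclusion criterion is violated for $\mathrm{GHZ}_4$: it asserts that $\hat{\rho}_{AC|j}$ has support only on $A$ while $\hat{\rho}_{BC|j}$ ``retains entangled support across $B$ and $C$.'' As you correctly observe, this cannot be right: since $\rho_{ABC}=\tfrac12(|000\rangle\langle000|+|111\rangle\langle111|)$ is invariant under swapping $A$ and $B$, the two conditional operators are identical ($\tfrac12|jj\rangle\langle jj|$ for each $j$), so their kernels coincide and the inclusion of Proposition~\ref{eq16} holds trivially; the necessary criterion is inconclusive here, and the paper's stated justification does not actually establish the proposition. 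Your dephasing argument supplies the missing obstruction: because $\rho_{ABC}$ is a fixed point of the completely dephasing channel $\mathcal{D}_{AB}$, and $\mathcal{D}_{AB}$ commutes with any map of the form $\mathrm{id}_{AB}\otimes\mathcal{R}$, a recovery map would force $\rho_{ABCD}$ to be $\mathcal{D}_{AB}$-invariant, which the coherence $|0000\rangle\langle1111|$ manifestly violates. This is rigorous, uses only linearity and the locality of $\mathcal{R}$ (never complete positivity), and therefore also excludes Hermitian-preserving quasiprobability recovery in the sense of Definition~\ref{def:VQMC}---a strictly stronger conclusion than the proposition as stated. The paper does gesture at the same intuition (its isometry $V|0\rangle=|00\rangle$, $V|1\rangle=|11\rangle$ can only reproduce the incoherent mixture $\tfrac12(|0000\rangle\langle0000|+|1111\rangle\langle1111|)$ from the classically correlated marginal), but it never converts this into an argument quantifying over \emph{all} CPTP maps; your commutation step is exactly what closes that gap.
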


\begin{proof}
We consider the genuine four-qubit entangled state in \eqref{eq:GHZ4}.
The marginal on system \( ABC \) is
\begin{equation}
\rho_{ABC} = \text{Tr}_D(\rho_{ABCD}) = \frac{1}{2} \left( |000\rangle\langle 000| + |111\rangle\langle 111| \right),
\end{equation}
which is a classical mixture without entanglement between subsystems.

We define an isometric channel \( N \) via the isometry \( V: \mathbb{C}^2 \rightarrow \mathbb{C}^2 \otimes \mathbb{C}^2 \) such that
\begin{align}
V|0\rangle &= |00\rangle, \\
V|1\rangle &= |11\rangle.
\end{align}

Then we define
\begin{equation}
N(\rho_C) = V \rho_C V^\dagger.
\end{equation}
This is a valid CPTP map since \( V^\dagger V = I_2 \) and the output lies in \( \mathcal{B}(\mathcal{H}_C \otimes \mathcal{H}_D) \).

The violation $\mathrm{Ker}(\hat{\rho}_{AC}) \not\subseteq \mathrm{Ker}(\hat{\rho}_{BC})$ arises from the structural asymmetry in the conditional marginals of the GHZ$_4$ state. Specifically, each post-measurement state $\hat{\rho}_{AC|j}$ has support only on subsystem $A$, as the $C$ system has been projected out. In contrast, the corresponding marginals $\hat{\rho}_{BC|j}$ retain entangled support across both $B$ and $C$. As a result, the overall support of $\hat{\rho}_{BC}$ spans a larger subspace than that of $\hat{\rho}_{AC}$, making the kernel inclusion condition fail. 

Therefore, the GHZ$_4$ state does not satisfy the VQMC algebraic criterion. 
\end{proof}

\subsection{W state for VQMC}

\begin{lemma}[Support--kernel duality and rank monotonicity]\label{lemma:Supp}
Let $X,Y \succeq 0$ act on the same finite-dimensional Hilbert space. Then
\begin{equation}
\mathrm{Ker}(X)\subseteq \mathrm{Ker}(Y)\quad \Longleftrightarrow \quad \mathrm{Supp}(Y)\subseteq \mathrm{Supp}(X),
\end{equation}
where $\mathrm{Supp}(Z)=\mathrm{Ker}(Z)^\perp$ denotes the support of $Z$. Consequently,
\begin{equation}\label{eq:rank-monotonicity}
\mathrm{Ker}(X)\subseteq \mathrm{Ker}(Y)\quad \Longrightarrow \quad \mathrm{rank}(X)\ \ge\ \mathrm{rank}(Y).
\end{equation}
\end{lemma}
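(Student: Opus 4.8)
The plan is to treat this as a pure statement about subspaces, exploiting the fact that for a positive semidefinite operator the support is \emph{defined} as the orthogonal complement of the kernel. The entire equivalence will then follow from the single order-reversing property of orthogonal complementation: for any two subspaces $U,V$ of the ambient Hilbert space, $U\subseteq V$ holds if and only if $V^\perp\subseteq U^\perp$. I would state this complementation fact first (either as a one-line recollection or with a two-line justification: $U\subseteq V$ forces every vector orthogonal to all of $V$ to be orthogonal to all of $U$, and double complementation $ (U^\perp)^\perp=U$ in finite dimension gives the converse).

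With that in hand, the first claim is immediate. Applying the complementation equivalence with $U=\mathrm{Ker}(X)$ and $V=\mathrm{Ker}(Y)$ gives
\begin{equation}
\mathrm{Ker}(X)\subseteq \mathrm{Ker}(Y)\ \Longleftrightarrow\ \mathrm{Ker}(Y)^\perp\subseteq \mathrm{Ker}(X)^\perp,
\end{equation}
and since $\mathrm{Supp}(Z)=\mathrm{Ker}(Z)^\perp$ by definition, the right-hand side is exactly $\mathrm{Supp}(Y)\subseteq \mathrm{Supp}(X)$. This establishes the stated biconditional without any use of positivity beyond the definition of $\mathrm{Supp}$.

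For the rank implication I would use the identity $\mathrm{rank}(Z)=\dim\mathrm{Supp}(Z)$. Here positivity (or more generally normality) of $Z$ genuinely enters: for $Z\succeq 0$ the range equals the orthogonal complement of the kernel, so $\mathrm{Supp}(Z)=\mathrm{Ran}(Z)$ and hence $\dim\mathrm{Supp}(Z)=\dim\mathrm{Ran}(Z)=\mathrm{rank}(Z)$. Combining the equivalence just proved with the monotonicity of dimension under subspace inclusion, from $\mathrm{Supp}(Y)\subseteq\mathrm{Supp}(X)$ we read off $\dim\mathrm{Supp}(Y)\le\dim\mathrm{Supp}(X)$, which is precisely $\mathrm{rank}(Y)\le\mathrm{rank}(X)$, as in \eqref{eq:rank-monotonicity}.

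The result is elementary linear algebra, so there is no serious obstacle; the only point requiring a moment's care is the rank step, where I must invoke $\mathrm{Supp}(Z)=\mathrm{Ran}(Z)$ for $Z\succeq 0$ rather than applying it to an arbitrary operator (for a general non-normal operator the range and the kernel-complement need not coincide, and the clean identity $\mathrm{rank}=\dim\mathrm{Supp}$ could fail). Since both $X$ and $Y$ are assumed positive semidefinite, this hypothesis is exactly what makes the dimension count go through.
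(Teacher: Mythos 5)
Your proposal is correct and follows essentially the same route as the paper's own proof: both rest on the definition $\mathrm{Supp}(Z)=\mathrm{Ker}(Z)^\perp$ together with the identity $\dim\mathrm{Supp}(Z)=\mathrm{rank}(Z)$ for $Z\succeq 0$. You merely make explicit the order-reversing step $U\subseteq V\Leftrightarrow V^\perp\subseteq U^\perp$ that the paper leaves implicit, which is a welcome but not substantively different refinement.
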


\begin{proof}
The equivalence follows from the fact that for any subspace $S$, 
$\mathrm{Ker}(Z)=S$ if and only if $\mathrm{Supp}(Z)=S^\perp$. 
Since $\dim\mathrm{Supp}(Z)=\mathrm{rank}(Z)$ for $Z\succeq 0$, 
the inclusion of supports directly implies the rank inequality \eqref{eq:rank-monotonicity}.
\end{proof}

\begin{proposition}
Let $\rho_{ABCD} = |\mathrm{W}_4\rangle \langle \mathrm{W}_4|$ be the four-qubit W state in \eqref{eq9}, then $\rho_{ABCD}$ is a virtual quantum Markov chain (VQMC) state. That is, there exists a completely positive trace-preserving (CPTP) map $\mathcal{R}: C \rightarrow CD$ such that
\begin{equation}
\rho_{ABCD} = (\mathrm{id}_{AB} \otimes \mathcal{R})(\rho_{ABC}).
\end{equation}
\end{proposition}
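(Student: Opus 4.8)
The plan is to turn the recovery condition \eqref{eq:rhoABCD} into a finite linear system by expanding every operator in the computational basis of the spectator pair $AB$, and then to solve that system explicitly for the action of $\mathcal{R}$ on the single qubit $C$. First I would record the marginal: grouping $|\mathrm{W}_4\rangle$ by the value of $D$ gives $|\mathrm{W}_4\rangle = \tfrac12|\phi\rangle_{ABC}|0\rangle_D + \tfrac12|000\rangle_{ABC}|1\rangle_D$ with $|\phi\rangle = |001\rangle+|010\rangle+|100\rangle$, so tracing out $D$ returns $\rho_{ABC} = \tfrac14(|000\rangle\langle000|+|\phi\rangle\langle\phi|)$, exactly \eqref{W:rhoABC}. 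As a preliminary consistency check I would confirm, via Lemma~\ref{lemma:Supp}, that the necessary kernel-inclusion condition \eqref{eq:KerrhoAC} holds for both outcomes $j\in\{0,1\}$; since this is only necessary, the real content lies in actually producing $\mathcal{R}$.

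Next I would write $\rho_{ABC} = \sum_{\alpha,\beta}|\alpha\rangle\langle\beta|_{AB}\otimes\sigma_{\alpha\beta}$ and $\rho_{ABCD} = \sum_{\alpha,\beta}|\alpha\rangle\langle\beta|_{AB}\otimes\tau_{\alpha\beta}$ over $\alpha,\beta\in\{00,01,10,11\}$, where each $\sigma_{\alpha\beta}$ acts on $C$ and each $\tau_{\alpha\beta}$ on $CD$. Since $\mathrm{id}_{AB}\otimes\mathcal{R}$ never touches the $AB$ indices, condition \eqref{eq:rhoABCD} is \emph{equivalent} to the blockwise identities $\mathcal{R}(\sigma_{\alpha\beta})=\tau_{\alpha\beta}$ for all $\alpha,\beta$. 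Reading off the nonzero blocks (up to a common factor $\tfrac14$) gives $\sigma_{00,00}=I_C$, $\sigma_{01,01}=\sigma_{10,10}=|0\rangle\langle0|$, $\sigma_{00,01}=|1\rangle\langle0|$ on the input side, and $\tau_{00,00}=(|01\rangle+|10\rangle)(\langle01|+\langle10|)$, $\tau_{01,01}=|00\rangle\langle00|$, $\tau_{00,01}=(|01\rangle+|10\rangle)\langle00|$ on the output side.

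The decisive observation is that $\{I_C,|0\rangle\langle0|,|0\rangle\langle1|,|1\rangle\langle0|\}$ already spans $\mathcal{L}(\mathcal{H}_C)$, so these identities pin down $\mathcal{R}$ \emph{uniquely} on every operator, with no remaining freedom. Solving, $\mathcal{R}$ must send $|0\rangle\langle0|\mapsto|00\rangle\langle00|$, $|1\rangle\langle0|\mapsto(|01\rangle+|10\rangle)\langle00|$, $|0\rangle\langle1|\mapsto|00\rangle(\langle01|+\langle10|)$, and — forced through $\mathcal{R}(I_C)=\tau_{00,00}$ — $|1\rangle\langle1|\mapsto(|01\rangle+|10\rangle)(\langle01|+\langle10|)-|00\rangle\langle00|$. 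I would then confirm that the remaining blocks ($\tau_{01,10},\tau_{10,00},\dots$) are reproduced automatically, and verify directly that this $\mathcal{R}$ is Hermitian-preserving and trace-preserving.

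The hard part is positivity, and it is where the word \emph{virtual} earns its keep. The forced image $\mathcal{R}(|1\rangle\langle1|)$ is negative on $|00\rangle_{CD}$ (its expectation there is $-1$), so $\mathcal{R}$ is not completely positive; because the blockwise system admits no other solution, there is in fact \emph{no} genuine CPTP channel realizing \eqref{eq:rhoABCD}, and the statement must be read in the virtual sense. The resolution — and the precise way in which $|\mathrm{W}_4\rangle$ is a VQMC — is the quasiprobability formulation of Definition~\ref{def:VQMC}: I would form the Choi operator of this unique $\mathcal{R}$, split it into positive and negative parts $J_{CC'D}=J_1-J_2$ with $J_1,J_2\succeq0$, and rescale to obtain CPTP maps $\mathcal{N}_1,\mathcal{N}_2$ and weights $c_1,c_2\ge0$ with $\mathcal{R}=c_1\mathcal{N}_1-c_2\mathcal{N}_2$. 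This exhibits the W state as a VQMC and, feeding the same Choi data into the SDP, simultaneously delivers its sampling overhead $\nu$. I expect the only genuine labor to be bookkeeping the eight-dimensional Choi matrix; the existence of the virtual recovery is assured the instant the blockwise linear system is seen to be consistent.
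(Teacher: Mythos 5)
Your proposal is correct, and it takes a genuinely different route from the paper --- one that is in fact more reliable. The paper's proof proceeds by (i) computing the conditional operators $\hat{\rho}_{AC|j},\hat{\rho}_{BC|j}$ and checking the kernel-inclusion condition, and then (ii) writing down an explicit Choi matrix $J_{CC'D}=|0\rangle\langle0|_C\otimes|\psi_0\rangle\langle\psi_0|+|1\rangle\langle1|_C\otimes|\psi_1\rangle\langle\psi_1|$ and asserting that the induced channel reproduces $\rho_{ABCD}$. That Choi matrix describes a measure-and-prepare channel, which annihilates the $C$-coherences; in your notation it sends $\sigma_{00,01}=\tfrac14|1\rangle\langle0|_C$ to zero while the target block $\tau_{00,01}=\tfrac14(|01\rangle+|10\rangle)\langle00|$ is nonzero (it already fails on the diagonal block, since $\mathcal{R}(\tfrac14 I_C)\neq\tfrac14(|01\rangle+|10\rangle)(\langle01|+\langle10|)$). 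So the paper's candidate map does not satisfy the recovery condition. Your blockwise reduction settles the matter: because $\{\sigma_{\alpha\beta}\}$ spans $\mathcal{L}(\mathcal{H}_C)$, any linear $\mathcal{R}$ obeying \eqref{eq:rhoABCD} is uniquely determined, and the forced image $\mathcal{R}(|1\rangle\langle1|)=(|01\rangle+|10\rangle)(\langle01|+\langle10|)-|00\rangle\langle00|$ has a negative eigenvalue. Hence \emph{no} CPTP map $\mathcal{R}:C\to CD$ exists --- consistent with the fact that $I(AB;D|C)=1>0$ for $|\mathrm{W}_4\rangle$ --- and the proposition as literally worded, together with the paper's proof of it, is incorrect. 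What survives is exactly the virtual statement you extract: the unique solution is Hermitian-preserving and trace-preserving, so by the equivalence asserted in Definition~\ref{def:VQMC} it decomposes as $c_1\mathcal{N}_1-c_2\mathcal{N}_2$ with $\mathcal{N}_1,\mathcal{N}_2$ CPTP, certifying $|\mathrm{W}_4\rangle$ as a VQMC with strictly positive sampling overhead $\nu$.

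The one step you should not wave at is the final decomposition: the Jordan (positive/negative) split of the Choi operator does not by itself produce pieces whose partial traces are proportional to $I_C$, so $J_1,J_2$ as you define them need not be Choi matrices of trace-preserving maps. You must either add a compensating term of the form $\lambda\, I_C\otimes\omega_{C'D}$ to both parts, or simply invoke the general fact that every HPTP map admits a quasiprobability decomposition into CPTP maps (this is the content of the equivalence stated in Definition~\ref{def:VQMC} and of \cite{14}). With that citation in place, your argument is complete.
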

\begin{proof}
Tracing out qubit $D$ from the four-qubit W state yields a mixed three-qubit state $\rho_{ABC}$ in \eqref{W:rhoABC}.
We now compute the conditional states of $AC$ upon measurement of system $C$. 
Consider first the components of $|\mathrm{W}_4\rangle$ in which the $C$ qubit is in the basis state $|0\rangle$.

\begin{equation}
|1000\rangle,\ |0100\rangle,\ |0001\rangle 
\ \Rightarrow\ \text{after tracing out $D$}:\
|100\rangle,\ |010\rangle,\ |000\rangle.
\end{equation}
Projecting onto $C$ in state $|0\rangle$ and tracing out $B$ yields
\begin{equation}
\hat{\rho}_{AC|0} = \frac{1}{4} \left( |00\rangle + |10\rangle \right)\left( \langle00| + \langle10| \right).
\end{equation}
Next, the only component of $|\mathrm{W}_4\rangle$ with the $C$ qubit in state $|1\rangle$ is $|0010\rangle$, which reduces to $|001\rangle$ in $\rho_{ABC}$. Thus,
\begin{align}
\hat{\rho}_{AC|1} &= \frac{1}{4} \, |01\rangle\langle 01| 
= \frac{1}{4} |\psi_1\rangle \langle \psi_1|, \quad \text{where } |\psi_1\rangle = |01\rangle.
\end{align}
By direct computation, the conditional operators on subsystem $BC$ are
\begin{equation}\label{eq:41}
\hat{\rho}_{BC|0} = \tfrac{1}{4}\,(|00\rangle+|10\rangle)(\langle 00|+\langle 10|), 
\qquad
\hat{\rho}_{BC|1} = \tfrac{1}{4}\,|01\rangle\langle 01|.
\end{equation}
Hence, we have
\begin{align}
\operatorname{Ker}(\hat{\rho}_{AC|0})
=
\operatorname{Ker}(\hat{\rho}_{BC|0}) &= \mathrm{span}\{ |01\rangle, |11\rangle, |00\rangle - |10\rangle \}, \\
\operatorname{Ker}(\hat{\rho}_{AC|1})
=
\operatorname{Ker}(\hat{\rho}_{BC|1}) &= \mathrm{span}\{ |00\rangle, |10\rangle, |11\rangle \}.
\end{align}

For $X=\hat{\rho}_{AC|j}$ and $Y=\hat{\rho}_{BC|j}$, the kernel-inclusion condition
\begin{equation}
\mathrm{Ker}(\hat{\rho}_{AC|j}) \subseteq \mathrm{Ker}(\hat{\rho}_{BC|j})
\end{equation}
implies, by Lemma~\ref{lemma:Supp}, that
\begin{equation}
\mathrm{rank}(\hat{\rho}_{AC|j}) \ \ge \ \mathrm{rank}(\hat{\rho}_{BC|j}).
\end{equation}
Since $\hat{\rho}_{BC|0}$ and $\hat{\rho}_{BC|1}$ are both rank-one, it follows that $\mathrm{rank}(\hat{\rho}_{AC|j}) \ge 1$. 
From the explicit forms in Eq.\eqref{eq:41}, we see that $\mathrm{rank}(\hat{\rho}_{AC|0})=\mathrm{rank}(\hat{\rho}_{AC|1})=1$. 
Therefore, for both $j=0,1$ we have
\begin{equation}
\mathrm{Ker}(\hat{\rho}_{AC|j})=\mathrm{Ker}(\hat{\rho}_{BC|j}),
\end{equation}
and in particular the inclusion $\mathrm{Ker}(\hat{\rho}_{AC|j}) \subseteq \mathrm{Ker}(\hat{\rho}_{BC|j})$ holds.

To further confirm that the four-qubit W state satisfies the VQMC condition, we explicitly construct a CPTP map $\mathcal{R}: C \to C \otimes D$ with Choi matrix
\begin{equation}
J_{CC'D} = |0\rangle\langle0|_C \otimes |\psi_0\rangle\langle \psi_0|_{C'D} 
+ |1\rangle\langle1|_C \otimes |\psi_1\rangle\langle \psi_1|_{C'D},
\end{equation}
where
\begin{equation}
|\psi_0\rangle = \tfrac{1}{\sqrt{2}}(|00\rangle + |01\rangle), 
\qquad |\psi_1\rangle = |10\rangle.
\end{equation}
This operator is positive semidefinite and satisfies
\begin{equation}
\operatorname{Tr}_{C'D}(J_{CC'D}) = I_C,
\end{equation}
thus defining a valid CPTP map $\mathcal{R}$ via the Choi–Jamiolkowski isomorphism:
\begin{equation}
\mathcal{R}(\rho) = \operatorname{Tr}_{C'}\!\left[ (\rho^T \otimes I_{C'D}) J_{CC'D} \right].
\end{equation}
Applying $\mathrm{id}_{AB} \otimes \mathcal{R}$ to the marginal state $\rho_{ABC}$ in \eqref{W:rhoABC} gives
\begin{equation}
(\mathrm{id}_{AB} \otimes \mathcal{R})(\rho_{ABC}) = \rho_{ABCD}.
\end{equation}
Therefore, the four-qubit W state indeed satisfies the VQMC condition. 
\qed
\end{proof}

\subsection{Recover four-qubit from two-qubit}

We now investigate whether a four-qubit quantum state $\rho_{ABCD}$ can be recovered directly from its two-qubit state $\rho_{AB}$ via a completely positive trace-preserving (CPTP) channel $\mathcal{R}: B \to BCD$. In other words, we ask whether
\begin{equation}
\rho_{ABCD} = (I_A \otimes \mathcal{R})(\rho_{AB})
\end{equation}
can hold for some fixed $\mathcal{R}$, independently of the particular structure of $\rho_{ABCD}$.

We expand the global state in an operator basis on qubit $A$:
\begin{equation}
\rho_{ABCD} = \sum_{i,j \in \{0,1\}} |i\rangle\langle j|_A \otimes M_{ij}^{BCD},
\end{equation}
where
\begin{equation}
M_{ij}^{BCD} \coloneqq {}_A\!\langle i| \rho_{ABCD} |j\rangle_A
\end{equation}
is an operator acting on $BCD$. Tracing out $CD$ gives
\begin{equation}
\rho_{AB} = \operatorname{Tr}_{CD} \rho_{ABCD} = \sum_{i,j} |i\rangle\langle j|_A \otimes \operatorname{Tr}_{CD}\big( M_{ij}^{BCD} \big).
\end{equation}

If such a channel $\mathcal{R}$ exists, linearity of quantum operations implies that for all $i,j$,
\begin{equation} \label{eq:necessary}
M_{ij}^{BCD} = \mathcal{R}\!\left( \operatorname{Tr}_{CD} M_{ij}^{BCD} \right).
\end{equation}
Thus, the action of $\mathcal{R}$ on the $B$-operators $\operatorname{Tr}_{CD} M_{ij}^{BCD}$ must reproduce the full $BCD$-operators $M_{ij}^{BCD}$.

We consider the W state defined as \eqref{eq9}
and let $\rho_{ABCD} = |W_4\rangle\langle W_4|$.

We group the terms by the value of qubit $A$:
\begin{align}
|W_4\rangle &= |0\rangle_A \otimes \frac{1}{2} \left( |001\rangle_{BCD} + |010\rangle_{BCD} + |100\rangle_{BCD} \right) \nonumber \\
&\quad + |1\rangle_A \otimes \frac{1}{2} |000\rangle_{BCD}.
\end{align}

From this decomposition we can write down the block operators $M_{ij}^{BCD}$ explicitly:
\begin{align}
M_{00}^{BCD} &= \frac{1}{4} \left( |001\rangle + |010\rangle + |100\rangle \right) \left( \langle 001| + \langle 010| + \langle 100| \right), \\
M_{01}^{BCD} &= \frac{1}{4} \left( |001\rangle + |010\rangle + |100\rangle \right) \langle 000|, \\
M_{10}^{BCD} &= \frac{1}{4} |000\rangle \left( \langle 001| + \langle 010| + \langle 100| \right), \\
M_{11}^{BCD} &= \frac{1}{4} |000\rangle\langle 000|.
\end{align}

We now compute their partial traces over $CD$:
\begin{align}
\operatorname{Tr}_{CD} M_{00}^{BCD} &= \frac{1}{4} \left( |0\rangle\langle 0|_B + |1\rangle\langle 1|_B + |0\rangle\langle 0|_B \right) \nonumber \\
&= \frac{1}{2} |0\rangle\langle 0|_B + \frac{1}{4} |1\rangle\langle 1|_B, \\
\operatorname{Tr}_{CD} M_{01}^{BCD} &= \frac{1}{4} |0\rangle\langle 0|_B, \\
\operatorname{Tr}_{CD} M_{10}^{BCD} &= \frac{1}{4} |0\rangle\langle 0|_B, \\
\operatorname{Tr}_{CD} M_{11}^{BCD} &= \frac{1}{4} |0\rangle\langle 0|_B.
\end{align}

Suppose a channel $\mathcal{R}$ satisfying \eqref{eq:necessary} exists. Then, in particular, we must have
\begin{align}
\mathcal{R}\!\left( \tfrac{1}{4} |0\rangle\langle 0|_B \right) &= M_{01}^{BCD} = \frac{1}{4} \left( |001\rangle + |010\rangle + |100\rangle \right) \langle 000|, \label{eq:cond01} \\
\mathcal{R}\!\left( \tfrac{1}{4} |0\rangle\langle 0|_B \right) &= M_{11}^{BCD} = \frac{1}{4} |000\rangle\langle 000|. \label{eq:cond11}
\end{align}
The left-hand sides of \eqref{eq:cond01} and \eqref{eq:cond11} are identical because the inputs are the same operator on $B$, yet the right-hand sides are \emph{different} operators on $BCD$ (one contains off-diagonal terms, the other is purely diagonal). This is impossible for a linear map $\mathcal{R}$.

Therefore, no CPTP channel $\mathcal{R}:B\to BCD$ can satisfy \eqref{eq:necessary} for all blocks $M_{ij}^{BCD}$ in the case of the $W_4$ state. This shows that, in general, it is not possible to recover $\rho_{ABCD}$ from $\rho_{AB}$ alone by acting only on subsystem $B$. Larger marginals, such as $\rho_{ABC}$, are required for recovery in the context of virtual quantum Markov chains.

\subsection{ Mixture of the 4-qubit GHZ and W states for VQMC}

We consider the convex mixture of the four-qubit GHZ and W states \eqref{eq:rhop}
where GHZ and W states are defined as \eqref{eq9}\eqref{eq:GHZ4},
and $\rho_{\mathrm{GHZ}_4} := |\mathrm{GHZ}_4\rangle\langle \mathrm{GHZ}_4|$, 
$\rho_{\mathrm{W}_4} := |\mathrm{W}_4\rangle\langle \mathrm{W}_4|$.
We examine whether the resulting state $\rho_p$ satisfies the virtual quantum Markov chain (VQMC) condition \eqref{eq16}. We now compute the $A$-side block matrices $\Theta_{B|A}$ and $\Theta_{BC|A}$ of the marginal states $\rho^{(p)}_{AB} = \operatorname{Tr}_C \rho_{ABC}^{(p)}$ and $\rho_{ABC}^{(p)}$, respectively. We consider \eqref{W:rhoABC}. For the GHZ component, tracing out subsystem $C$ gives 
\begin{equation}  
\rho_{AB}^{(GHZ)} = \operatorname{Tr}_C(\rho_{ABC}^{(GHZ)}) = \frac{1}{2}(|00\rangle\langle 00| + |11\rangle\langle 11|).
\end{equation}
For the W component we obtain
\begin{equation}
\rho_{AB}^{(W)} = \operatorname{Tr}_C(\rho_{ABC}^{(W)}) = \frac{1}{4} \left( |00\rangle\langle 00| + \sigma_{AB} \right),
\end{equation}
where $\sigma_{AB}$ denotes the partial trace of $|\phi\rangle\langle \phi|$ over $C$. Hence the convex mixture satisfies
\begin{equation}
\rho_{AB}^{(p)} = p\, \rho_{AB}^{(W)} + (1 - p)\, \rho_{AB}^{(GHZ)}.
\end{equation}
From $\rho_{AB}^{(p)}$ we extract the $A$-side block matrix  $\Theta_{B|A}^{(p)} = [ \langle i|_A \rho_{AB}^{(p)} |j\rangle_A ]_{i,j=0,1}$.
Similarly we have
\begin{equation}
\Theta_{BC|A}^{(p)} = p\, \Theta_{BC|A}^{(W)} + (1 - p)\, \Theta_{BC|A}^{(GHZ)},
\end{equation}
where each block is given by
\begin{equation}
[\Theta_{BC|A}]_{ij} := \langle i|_A \rho_{ABC}^{(p)} |j\rangle_A.
\end{equation}
These blocks are $4 \times 4$ matrices on $BC$.

For instance:
\begin{equation}
\Theta_{BC|A}^{(GHZ)} = 
\begin{bmatrix}
\frac{1}{2} |00\rangle\langle 00| & 0 \\
0 & \frac{1}{2} |11\rangle\langle 11|
\end{bmatrix},
\quad
\Theta_{BC|A}^{(W)} = 
\begin{bmatrix}
\frac{1}{4} |00\rangle\langle 00| + X & Y \\
Y^\dagger & Z
\end{bmatrix},
\end{equation}
with $X, Y, Z$ derived from $|\phi\rangle\langle\phi|$.
The VQMC criterion requires:
\begin{equation}
\mathrm{Ker}(\Theta_{B|A}^{(p)}) \subseteq \mathrm{Ker}(\Theta_{BC|A}^{(p)}).
\end{equation}
Due to the rank structure of the GHZ blocks, which are diagonal and rank-1 in each block, their kernels are large. The W blocks introduce off-diagonal components which shrink the kernels.
As $p \to 1$, $\Theta_{BC|A}^{(p)}$ approaches the fully supported W-type block matrix and the inclusion holds. However, for $p$ small (GHZ-dominated), the off-diagonal support is too small to satisfy the inclusion, and the condition fails.
Therefore, there exists a threshold value $p_c$ such that $\rho_p$ satisfies the VQMC condition if and only if $p \ge p_c$.

\subsubsection{Analytic estimate of the critical threshold}

To estimate the minimum weight $p_c$ required for the mixed state 
$\rho_p := p \rho_{\mathrm{W}_4} + (1-p) \rho_{\mathrm{GHZ}_4}$
to satisfy the virtual quantum Markov chain condition, we consider the kernel inclusion criterion:
\begin{equation}
\mathrm{Ker}(\Theta_{B|A}^{(p)}) \subseteq \mathrm{Ker}(\Theta_{BC|A}^{(p)}).
\end{equation}
This condition fails when vectors in the kernel of $\Theta_{B|A}^{(p)}$ are mapped non-trivially by $\Theta_{BC|A}^{(p)}$. Using representative vectors and tracking their images under the W and GHZ components, we find that for small $p$, the GHZ support dominates and the inclusion fails. As $p$ increases, the off-diagonal components contributed by $|\phi\rangle$ in $\rho^{(W)}_{ABC}$ begin to shrink the kernel of $\Theta_{BC|A}^{(p)}$. 

\begin{theorem}
   There exists a sharp threshold $p_c$ such that $\rho_p$ is a VQMC if and only if $p \ge p_c$. The lower bound $p_c$ satisfies
\begin{equation}
p_c \ge \frac{1}{4}.
\end{equation}
\end{theorem}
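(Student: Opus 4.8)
The plan is to prove the two assertions separately: the lower bound $p_c\ge\tfrac14$ is the tractable part and follows from a necessary condition, whereas the existence of a single sharp threshold is the delicate part and is where I expect the real work. First I would settle the lower bound. By Proposition~\ref{eq16} the kernel-inclusion condition is necessary for a state to be a VQMC, and the analysis of the preceding subsection already exhibits that for the mixture $\rho_p$ the inclusion $\operatorname{Ker}(\Theta_{B|A}^{(p)})\subseteq\operatorname{Ker}(\Theta_{BC|A}^{(p)})$ breaks once $p<\tfrac14$: the off-diagonal coherence contributed by the W component scales as $p/4$ while the GHZ populations scale as $(1-p)/2$, so the one-dimensional kernel of $\Theta_{B|A}^{(p)}$, generated by the antisymmetric vector $|01\rangle-|10\rangle$, fails to be contained in $\operatorname{Ker}(\Theta_{BC|A}^{(p)})$ below the balance point $p=\tfrac14$. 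Hence $\rho_p$ cannot be a VQMC for $p<\tfrac14$, giving $p_c\ge\tfrac14$ immediately.

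For the threshold itself I would prove monotonicity: if $\rho_{p_0}$ is a VQMC then so is $\rho_p$ for every $p\in[p_0,1]$. The key algebraic identity is $\rho_p=\lambda\,\rho_{p_0}+(1-\lambda)\,\rho_{\mathrm{W}_4}$ with $\lambda=(1-p)/(1-p_0)\in[0,1]$, which holds simultaneously for the four-party states and, by linearity of the partial trace, for their $ABC$ marginals. If one can exhibit a single CPTP map $\mathcal{R}:C\to CD$ that recovers both $\rho_{p_0}$ and the W state $\rho_{\mathrm{W}_4}$ (the latter being a VQMC by the W-state proposition above), then linearity of $\mathrm{id}_{AB}\otimes\mathcal{R}$ gives $(\mathrm{id}_{AB}\otimes\mathcal{R})(\rho^{(p)}_{ABC})=\rho_p$ for all such $p$. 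Together with closedness of the feasible set — which follows from compactness of the admissible Choi operators $J_{CC'D}\succeq 0$ with $\operatorname{Tr}_{C'D}J_{CC'D}=I_C$ and continuity in $p$ of the SDP recovery constraint — this shows the VQMC-feasible set is a closed up-set, necessarily of the form $[p_c,1]$, and sharpness follows by taking $p_c:=\inf\{p:\rho_p\text{ is a VQMC}\}$.

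The main obstacle is the shared-recovery-map step. The maps certifying $\rho_{p_0}$ and $\rho_{\mathrm{W}_4}$ are a priori distinct, and the non-convexity of the VQMC set (noted in the abstract) warns that combining certificates naively can fail; so I would run the argument at the threshold value $p_0=p_c$, where I expect the feasible Choi operator to sit on the boundary of the SDP cone and to be alignable with the explicit W-recovery Choi matrix $J_{CC'D}$ constructed in the W-state proposition. Verifying that this one boundary map also satisfies the recovery constraint for the GHZ-weighted marginal at $p=p_c$, and then propagating feasibility upward through the convex identity above, is the crux. The complementary risk — that feasibility could reappear on some interval below $p_c$ — must also be excluded; since kernel inclusion already fails there, Proposition~\ref{eq16} rules this out, so the necessary condition does double duty in pinning the threshold from below.
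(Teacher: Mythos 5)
There is a genuine gap in both halves of your argument. For the lower bound, you derive $p_c \ge \tfrac14$ from a failure of the kernel-inclusion condition for $p<\tfrac14$, but that route cannot work for this family: each conditional block of $\rho_p$ is a convex combination $p\,X_{\mathrm W}+(1-p)\,X_{\mathrm{GHZ}}$ of two positive semidefinite operators, and for positive semidefinite summands with strictly positive weights one has $\operatorname{Ker}\!\big(pX_{\mathrm W}+(1-p)X_{\mathrm{GHZ}}\big)=\operatorname{Ker}(X_{\mathrm W})\cap\operatorname{Ker}(X_{\mathrm{GHZ}})$, which is independent of $p$ on the entire open interval $(0,1)$. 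Hence the kernel-inclusion criterion is a single yes/no property on $(0,1)$ and cannot separate $p<\tfrac14$ from $p>\tfrac14$. Your ``balance point'' reasoning compares the magnitudes of nonzero coefficients ($p/4$ versus $(1-p)/2$), which is irrelevant to a support/kernel condition — and in any case those quantities balance at $p=2/3$, not $1/4$. Any genuine proof of $p_c\ge\tfrac14$ must come from infeasibility of the full CPTP recovery constraint (e.g.\ the SDP of Section~\ref{subsec:SDP-test}), not from Proposition~\ref{eq16}.

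For the sharpness half, your reduction via $\rho_p=\lambda\rho_{p_0}+(1-\lambda)\rho_{\mathrm W_4}$ with $\lambda=(1-p)/(1-p_0)$ is algebraically correct, but the whole argument rests on exhibiting a \emph{single} CPTP map that recovers both endpoints, and that is precisely the unproved step: the paper's own non-convexity counterexample (the W state mixed with a classical-conditional VQMC state) shows that two VQMC states need not share a recovery map and that their mixtures can fail to be VQMC, so this cannot be waved through by pointing at the boundary of the SDP cone. You should also be aware that the paper supplies no proof of this theorem at all — only a heuristic paragraph followed by a deferral to numerical SDP feasibility checks — and is in fact internally inconsistent with it: the earlier theorem on the insufficiency of kernel inclusion asserts that $\rho_p$ in \eqref{eq:rhop} is \emph{not} a VQMC for all $p\ge\tfrac14$, which contradicts the present statement unless $p_c=1$. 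So your instinct that real work is required for both the threshold's existence and the bound is correct, but the proposal as written does not supply it.
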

The exact value of $p_c$ can be further determined numerically by SDP feasibility checks.

A natural question is whether the set of four-qubit states satisfying the virtual quantum Markov chain (VQMC) condition is convex. That is, if two states $\rho_1, \rho_2 \in \mathcal{V}_4$, does it follow that any convex combination $\rho_\lambda := \lambda \rho_1 + (1 - \lambda) \rho_2$ also belongs to $\mathcal{V}_4$ for all $\lambda \in [0,1]$? This question arises naturally because many important sets in quantum information theory---such as separable states or PPT states---are convex. However, we will show that the VQMC condition is not preserved under convex mixing, thereby disproving this convexity conjecture.

To that end, we construct a concrete counterexample. Let $\rho_1$ be the pure four-qubit W state \eqref{eq9} and let $\rho_2$ be the classical-conditional VQMC state
\begin{equation}
\rho_2 := \frac{1}{2} \left( |00\rangle_{AB}\langle 00| \otimes |0\rangle_C\langle 0| \otimes |0\rangle_D\langle 0| + |11\rangle_{AB}\langle 11| \otimes |1\rangle_C\langle 1| \otimes |1\rangle_D\langle 1| \right).
\end{equation}
Both $\rho_1$ and $\rho_2$ satisfy the VQMC condition. For $\rho_1$, this has been verified via direct evaluation of the inclusion criterion. For $\rho_2$, the state is conditionally separable given system $C$, and can be trivially reconstructed via a recovery map that measures $C$ and appends $\rho_D^{(j)}$ accordingly.

Now consider the convex combination $\rho_\lambda = \lambda \rho_1 + (1 - \lambda) \rho_2$ with $\lambda \in (0,1)$, and examine the conditional post-measurement states given outcome $C=1$. From $\rho_1$, we have
\begin{equation}
\hat{\rho}_{AC|1}^{(1)} = |01\rangle\langle 01|, \quad \hat{\rho}_{BC|1}^{(1)} = |00\rangle\langle 00|.
\end{equation}
From $\rho_2$, we have
\begin{equation}
\hat{\rho}_{AC|1}^{(2)} = |11\rangle\langle 11|, \quad \hat{\rho}_{BC|1}^{(2)} = |11\rangle\langle 11|.
\end{equation}
The convex mixture gives
\begin{equation}
\hat{\rho}_{AC|1} = \lambda |01\rangle\langle 01| + (1 - \lambda) |11\rangle\langle 11|,\quad
\hat{\rho}_{BC|1} = \lambda |00\rangle\langle 00| + (1 - \lambda) |11\rangle\langle 11|.
\end{equation}
These conditional states have supports on orthogonal pairs, leading to the kernel spaces
\begin{equation}
\operatorname{Ker}(\hat{\rho}_{AC|1}) = \mathrm{span} \{ |00\rangle, |10\rangle \},\quad
\operatorname{Ker}(\hat{\rho}_{BC|1}) = \mathrm{span} \{ |01\rangle, |10\rangle \}.
\end{equation}
Since $|00\rangle \in \operatorname{Ker}(\hat{\rho}_{AC|1})$ but $|00\rangle \notin \operatorname{Ker}(\hat{\rho}_{BC|1})$, the inclusion condition fails:
\begin{equation}
\operatorname{Ker}(\hat{\rho}_{AC|1}) \not\subseteq \operatorname{Ker}(\hat{\rho}_{BC|1}),
\end{equation}
hence $\rho_\lambda \notin \mathcal{V}_4$ despite both endpoints belonging to $\mathcal{V}_4$. This constitutes a strict counterexample to convexity of the VQMC state space.

This result illustrates the geometric richness and structural complexity of $\mathcal{V}_4$. The failure of the VQMC condition under convex mixing arises from a delicate mismatch between support spaces of conditional marginal states, which under interpolation can cause the kernel inclusion to break down. This suggests that $\mathcal{V}_4$ is non-convex and may possess stratified or curved boundaries, unlike more familiar convex sets such as separable or PPT states.

\subsection{Choi matrix structure and interpretability} 

In cases where partial information about the recovery map is available (e.g., symmetry constraints or measurement structure), one may directly analyze the structure of the Choi matrix. This can reveal whether the state admits a classical conditioning structure or admits a measurement-based recovery mechanism. Such cases help bridge between operational interpretations and abstract recovery frameworks.

\section{Conclusion}

In this work, we have extended the framework of virtual quantum Markov chains (VQMCs) from tripartite to four-qubit systems, establishing both structural characterizations and computational tools for their verification. We demonstrated that the kernel inclusion criterion provides a necessary condition, analyzed explicit examples such as the W and GHZ states, and developed an SDP-based approach for quantifying the sampling overhead required for virtual recovery. These results reveal the non-convexity of the VQMC set and highlight open directions, including extensions to higher-dimensional systems and more efficient SDP algorithms. Beyond their theoretical value, our findings have potential applications in quantum simulation, error mitigation, and quantum communication protocols.

\section*{Acknowledgements}
Authors were supported by the NNSF of China (Grant No. 12471427), and the Fundamental Research Funds for the Central Universities (Grant Nos. ZG216S2110).

\end{document}